\newcommand\T{\rule{0pt}{2.6ex}}       
\newcommand{\eps}{\epsilon}
\newcommand{\E}{\mathbf{E}}
\renewcommand{\Pr}{\mathbf{Pr}}
\newcommand{\abs}[1]{\left| #1 \right|}
\newcommand{\sAMS}{\overset{\tt{AMS}}{\sim}}
\newcommand{\CEN}{{\em CountEachSimple}}
\newcommand{\CE}{{\em CountEach}}
\newcommand{\CA}{{\em CountAll}}
\newcommand{\CM}{{\em CountMin}}
\newcommand{\EstR}{{\em TrackR}}
\newcommand{\EstP}{{\em TrackProb}}
\newcommand{\EstH}{{\em TrackEntropy}}
\newcommand{\Est}{{\sf Est}}
\newtheorem{theorem}{Theorem}
\newtheorem{lemma}{Lemma}
\newtheorem{corollary}{Corollary}
\newtheorem{remark}{Remark}}
\newtheorem{definition}{Definition}}
\newcommand{\qinhides}[1]{}
\newenvironment{proof}{\trivlist\item[]\emph{Proof:}}%
{\unskip\nobreak\hskip 1em plus 1fil\nobreak$\Box$
\parfillskip=0pt%
\endtrivlist}
\begin{document}

\title{Improved Algorithms for Distributed Entropy Monitoring\thanks{Work supported in part by NSF CCF-1525024, IIS-1633215, and IU’s
Office of the Vice Provost for Research through the Faculty
Research Support Program.}}

\author{Jiecao Chen\\Indiana University Bloomington\\
              jiecchen@umail.iu.edu 
\and Qin Zhang\thanks{Corresponding author.}\\Indiana University Bloomington\\
              qzhangcs@indiana.edu}
  
\date{}

\maketitle

\begin{abstract}
Modern data management systems often need to deal with massive, dynamic and inherently distributed data sources. We collect the data using a distributed network, and at the same time try to maintain a global view
of the data at a central coordinator using a minimal amount of communication. Such applications have been captured by the distributed monitoring model which has attracted a lot of attention in recent years.  In this paper we investigate the monitoring of the entropy functions, which are very useful in network monitoring applications such as detecting distributed denial-of-service attacks. 
Our results improve the previous best results by Arackaparambil et al.~\cite{ABC09}.   Our technical contribution also includes implementing the celebrated AMS sampling method (by Alon et al.~\cite{AMS99}) in the distributed monitoring model, which could be of independent interest.

\end{abstract}

\section{Introduction}

Modern data management systems often need to deal with massive, dynamic, and inherently distributed data sources, such as packets passing through the IP backbone network, loads of machines in content delivery service systems, data collected by large-scale environmental sensor networks, etc. 
One of the primary goals is to detect important and/or abnormal events that have happened in  networks and systems in a timely manner, while incurring a minimal amount of communication overhead. These applications led to the study of   {\em (continuous) distributed monitoring}, which has attracted a lot of attention in the database and network communities in the past decade \cite{DR01,OJW03,BO03,JHRW04,GGR04,MSDO05,CGMR05,CG05,CMZ06,KCR06,SSK06,CMZ07,SSK08}. The model was then formalized by Cormode, Muthukrishnan and Yi~\cite{CMY08} in 2008, and since then considerable work has been done in theory, including tracking heavy hitters and quantiles~\cite{YZ09,HYZ12}, entropy~\cite{ABC09}, frequency moments~\cite{CMY08,WZ12}, and performing random sampling~\cite{CMYZ12,TW11}. Some of these problems have also been studied in the sliding window settings~\cite{CMYZ12,CLLT12,CY12}.  We note that a closely related model, called the {\em distributed streams} model, has been proposed and studied earlier by Gibbons and Tirthapura~\cite{GT01,GT02}. The distributed streams model focuses on one-shot computation, and is thus slightly different from the (continuous) distributed monitoring model considered in this paper.


In this paper we focus on monitoring the entropy functions.  Entropy is one of the most important functions in distributed monitoring due to its effectiveness in detecting distributed denial-of-service attacks (the empirical entropy of IP addresses passing through a router may exhibit a noticeable change under an attack), clustering to reveal interesting patterns and performing traffic classifications~\cite{XZB05,LCD05,LSOXZ06} (different values of empirical entropy correspond to different traffic patterns), which are central tasks in network monitoring. 
Previously the entropy problem has also been studied extensively in the data stream model~\cite{LSOXZ06,GMV06,CBM06,CCM10,HNO08}.
 Arackaparambil, Brody and Chakrabarti~\cite{ABC09} studied the Shannon and Tsallis entropy functions in the distributed {\em threshold} monitoring setting, where one needs to tell whether the entropy of the joint data stream $\ge \tau$ or $ \le \tau(1 - \eps)$ for a fixed threshold $\tau$ at any time step. 
They obtained algorithms using $\tilde{O}(k/(\eps^3\tau^3))$\footnote{$\tilde{O}(\cdot)$ ignores all polylog terms; see Table \ref{tab:notation} for details.} bits of communication where $k$ is the number of sites. Note that this bound can be arbitrarily large when $\tau$ is arbitrarily small. In this paper we design algorithms that can track these two entropy functions continuously using $\tilde{O}(k/\eps^2 + \sqrt{k}/\eps^3)$ bits of communication.  Our results work for all $\tau \ge 0$ simultaneously, and have improved upon the  results in \cite{ABC09} by at least a factor of $\min\{\sqrt{k}, 1/\eps\}$ (ignoring logarithmic factors), even for a constant $\tau$. Another advantage of our algorithms is that they can be easily extended to the sliding window cases (the approximation error needs to be an additive $\eps$ when $\tau$ is small).  

As a key component of our algorithms, we show how to implement the AMS sampling, a method initially proposed by Alon, Matias and Szegedy~\cite{AMS99} in the data stream model, in distributed monitoring.  Similar as that in the data stream model, this sampling procedure can be used to track general (single-valued) functions in the distributed monitoring model, and should be of independent interest.

We note that the techniques used in our algorithms are very different from that  in \cite{ABC09}.  The algorithm in \cite{ABC09} monitors the changes of the values of the entropy function over time, and argue that it has to ``consume'' many  items in the input streams for the function value to change by a factor of $\Theta(\eps \tau)$ ($\eps$ is the approximation error and $\tau$ is the threshold). We instead adapt the AMS sampling framework to the distributed monitoring model, as we will explain shortly in the technique overview.  We also note that using the AMS sampling for monitoring the entropy function has already been conducted  in the data stream model \cite{CCM10}, but due to the model differences it takes quite some non-trivial efforts to port this idea to the distributed monitoring model.

In the rest of this section, we will first define the distributed monitoring model, and then give an overview of the techniques used in our algorithms.
\vspace{3mm}

\noindent{\bf The Distributed Monitoring Model.}
In the distributed monitoring model, we have $k$ sites $S_1, \ldots, S_k$ and one central coordinator. Each site observes a stream $A_i$ of items  over time. Let $A = (a_1, \ldots, a_m) \in [n]^m$ be the joint global stream (that is, the concatenation of $A_i$'s with items ordered by their arrival times). Each $a_\ell$ is observed by exactly one of the $k$ sites at time $t_\ell$, where $t_1 < t_2 < \ldots < t_m$. Let $A(t)$ be the set of items received by the system until time $t$, and let $f$ be the function we would like to track. The coordinator is required to report $f(A(t))$ at {\em any} time step $t$. There is a two-way communication channel between each site and the coordinator. Our primary goal is to minimize the total bits of communication between sites and the coordinator in the whole process, since the communication cost directly links to the network bandwidth usage and energy consumption.\footnote{It is well-known that in sensor networks, communication is by far the biggest battery drain.~\cite{MFHH03}}
We also want to minimize the space usage and processing time per item at each site. Generally speaking, we want the total communication,  space usage and processing time per item to be {\em sublinear} in terms of input size $m$ and the item universe size $n$.

We also consider two standard {\em sliding window} settings, namely, the {\em sequence-based} sliding window and the {\em time-based} sliding window. In the sequence-based case, at any time step $t_{now}$ the coordinator is required to report $$f(A^w(t_{now})) = f(a_{L - w + 1}, \ldots, a_L),$$ where $w$ is the length of the sliding window and $L = \max\{\ell\ |\ t_\ell \le t_{now}\}$. In other words, the coordinator needs to  maintain the value of the function defined on the most recent $w$ items continuously. In the time-based case the coordinator needs to maintain the function on the items that are received in the last $t$ time steps, that is, on $A^t = A(t_{now}) \backslash A({t_{now} - t})$. To differentiate we call the full stream case the {\em infinite window}.

\vspace{3mm}

\begin{table}[t]
\begin{tabular}{|l|l|l|l|l|l|l|l| }
\hline
function & type & window & approx & (total) comm. & space (site) & time (site) & ref. \\
\hline
Shannon & threshold & infinite & multi. & $\tilde{O}({k}/{(\eps^3\tau^3)})$   & $\tilde{O}(\eps^{-2})$  & -- & \cite{ABC09} \\
\hline
Tsallis & threshold & infinite & multi. & $\tilde{O}({k}/{(\eps^3\tau^3)})$ & $\tilde{O}(\eps^{-2})$ & -- & \cite{ABC09}  \\
\hline
Shannon & continuous & infinite &  multi. & $\tilde{O}(k/\eps^2 + \sqrt{k}/\eps^3)$ & $\tilde{O}(\eps^{-2})$ &  $\tilde{O}(\eps^{-2})$ & new \\
\hline
Shannon & continuous & sequence & mixed & $\tilde{O}(k/\eps^2 + \sqrt{k}/\eps^3)$ & $\tilde{O}(\eps^{-2})$ & $\tilde{O}(\eps^{-2})$ & new \\
\hline
Shannon & continuous & time & mixed & $\tilde{O}(k/\eps^2 + \sqrt{k}/\eps^3)$ & $\tilde{O}(\eps^{-2})$ & $\tilde{O}(\eps^{-2})$ & new \\
\hline
Tsallis & continuous & infinite & multi. & $\tilde{O}(k/\eps^2 + \sqrt{k}/\eps^3)$ & $\tilde{O}(\eps^{-2})$  & $\tilde{O}(\eps^{-2})$ & new  \\
\hline
Tsallis & continuous & sequence & mixed & $\tilde{O}(k/\eps^2 + \sqrt{k}/\eps^3)$ & $\tilde{O}(\eps^{-2})$  & $\tilde{O}(\eps^{-2})$ & new \\
\hline
Tsallis & continuous & time & mixed & $\tilde{O}(k/\eps^2 + \sqrt{k}/\eps^3)$ & $\tilde{O}(\eps^{-2})$  & $\tilde{O}(\eps^{-2})$ & new \\
\hline
\end{tabular}
\caption{Summary of results.
{\em Sequence} and {\em time} denote sequence-based sliding window and time-based sliding window respectively. The correctness guarantees for the sliding window cases are different from the infinite-window case: {\em multi.}\ stands for $(1+\eps, \delta)$-approximation; and {\em mixed} stands for $(1+\eps, \delta)$-approximation when the entropy is larger than $1$, and $(\eps, \delta)$-approximation when the entropy is at most $1$. In the threshold monitoring model, $\tau$ is the threshold value.}  
\label{tab:results}
\end{table}

\noindent{\bf Our results.}  In this paper we study the following two entropy functions.
\begin{itemize}
\item {\em Shannon entropy (also known as empirical entropy).}  For an input sequence $A$ of length $m$, the Shannon entropy is defined as $H(A) = \sum_{i\in[n]}\frac{m_i}{m}\log\frac{m}{m_i}$ where $m_i$ is the frequency of the $i$th element. 

\item {\em Tsallis entropy.} The $q$-th $(q > 1)$ Tsallis entropy is defined as $T_q(A) =\frac{1 - \sum_{i \in [n]} (\frac{m_i}{m})^q}{q-1}$. It is known that when $q\rightarrow 1$, the $q$-th Tsallis entropy converges to  the Shannon entropy.  
\end{itemize}
We say $\hat{Q}$ is $(1+\eps, \delta)$-approximation of $Q$ iff $\Pr[|Q - \hat{Q}| > \epsilon Q ] \leq \delta$, and $\hat{Q}$ is $(\eps, \delta)$-approximation of $Q$ iff $\Pr[|Q - \hat{Q}| > \epsilon] \leq \delta$.
Our results are summarized in Table \ref{tab:results}. 
Note that $\log (1/\delta)$ factors are absorbed in the $\tilde{O}(\cdot)$ notation.
\vspace{3mm}

\noindent{\bf Technique Overview.}
We first recall the AMS sampling method in the data stream model. Let $A = \{a_1, \ldots, a_m\} \in [n]^m$ be the stream. The AMS sampling consists of three steps: (1)  pick $J \in [m]$ uniformly at random; (2) let $R = \abs{\{j: a_j=a_J, J \leq j \leq m\}}$ be the frequency of the element $a_J$ in the rest of the stream (call it $a_J$'s {\em tail frequency}); and (3) set $X =  f(R) - f(R-1)$. In the rest of the paper we will use  $(a_J, R) \sAMS A$ to denote the first two steps. Given $(m_1, m_2, \ldots, m_n)$ as the frequency vector of the data stream $A$, letting $\bar{f}(A) = \frac{1}{m} \sum_{i \in [n]} f(m_i)$,  it has been shown  that $\E[X] = \bar{f}(A)$ \cite{AMS99}.\footnote{To see this, note that 
$\E[X] = \sum_{j \in [m]} \E[X | a_J = j] \Pr[a_J = j] = \sum_{j \in [m]}  \left(\E[f(R) - f(R-1) | a_J = j] \cdot m_j / m\right)$, and $\E[f(R) - f(R-1) | a_J = j] = \sum_{k \in [m_j]} \frac{f(k) - f(k-1)}{m_j} = \frac{f(m_j)}{m_j}$.}
By the standard repeat-average technique (i.e. run multiple independent copies in parallel and take the average of the outcomes), we can use sufficient (possibly polynomial in $n$, but for entropy this is $\tilde{O}(1/\eps^2)$) i.i.d.\ samples of $X$ to get a $(1+\eps)$-approximation of $\bar{f}(A)$. 

A key component of our algorithms is to implement $(a_J, R) \sAMS A$ in distributed monitoring. Sampling $a_J$ can be done using a random sampling algorithm by Cormode et al.~\cite{CMYZ12}. Counting $R$ seems to be easy; however, in distributed monitoring $\Omega(m)$ bits of communication are needed if we want to keep track of $R$ exactly at any time step. One of our key observations is that a $(1+\eps)$-approximation of $R$ should be enough for a big class of functions, and we can use any existing counting algorithms (e.g., the one by Huang et al.~\cite{HYZ12}) to maintain such an approximation of $R$. Another subtlety is that the sample $a_J$ will change over time, and for every change we have to restart the counting process. Fortunately, we manage to bound the number of updates of $a_J$ by $O(\log m)$.

To apply the AMS sampling approach to the Shannon entropy functions efficiently, we need to tweak the framework a bit. The main reason is that the AMS sampling method works poorly on an input that has a very frequent element, or equivalently, when the entropy of the input is close to $0$. At a high level, our algorithms adapt the techniques developed by Chakrabarti et al.\ \cite{CCM10} for computing entropy in the data stream model where they track $p_{\max}$ as the empirical probability of the most frequent element $i_{\max}$, and approximate $\bar{f}(A)$ by 
$$(1-p_{\max}) \bar{f}(A\backslash i_{\max}) +  p_{\max}\log \frac{1}{p_{\max}},$$
where for a universe element $a$, $A\backslash a$ is the substream of $A$ obtained by removing all occurrences of $a$ in $A$ while keeping the orders of the rest of the items.
But due to inherent differences between (single) data stream and distributed monitoring, quite a few specific implementations need to be reinvestigated, and the analysis is also different since in distributed monitoring we primarily care about communication instead of space. For example, it is much more complicated to track $(1 - p_{\max})$ up to a $(1+\eps)$ approximation in distributed monitoring than in the data stream model , for which we need to assemble a set of  tools developed in previous work \cite{cormode05,YZ09,HYZ12}. 
\vspace{3mm}

\noindent{\bf Notations and conventions.}
We summarize the main notations in this paper in Table~\ref{tab:notation}.  We differentiate 
{\em item} and {\em element};  we use {\em item} to denote a token in the stream $A$, and {\em element} to denote an element from the universe $[n]$.  We refer to $\eps$ as {\em approximation error}, and $\delta$ as {\em failure probability}.

\begin{table}[t]
\centering
\begin{tabular}{|l|l|}
\hline
$k$ & number of sites  \\
\hline
$[n]$ & $[n] = \{1, 2, \ldots,  n\}$, the universe \\
\hline
$-i$  & $-i = [n]\backslash i = \{1, \ldots, i-1, i+1, \ldots, n \}$ \\
\hline
$s \in_R S$ & the process of sampling $s$ from set $S$ uniformly at random \\
\hline
$\log x, \ln x$ & $\log x = \log_2 x$, $\ln x = \log_e x$ \\
\hline
$A$ &  $A = (a_1, \ldots, a_m) \in [n]^m$ is a sequence of items \\
\hline
$A \backslash z$ &  subsequence obtained by deleting all the occurrences of $z$ from $A$ \\
\hline
$m_i$ & $m_i = \abs{\{j : a_j = i \}}$ is the frequency of element $i$ in $A$\\
\hline
$p_i$ & $p_i = \frac{m_i}{m}$, the empirical probability of $i$ \\
\hline
$\vec{p}$ & $\vec{p} = (p_1, p_2, \ldots, p_n)$ \\
\hline
$H(A) \equiv H(\vec{p})$ & $H(A) \equiv H(\vec{p}) = \sum_{i\in[n]} p_i\log p_i^{-1}$ is the Shannon entropy of $A$ \\
\hline
$m_{-i}$ & $m_{-i}= \sum_{j\in [n]\backslash i} m_j$ \\
\hline
$\bar{f}(A)$ & $\bar{f}(A) = \frac{1}{\abs{A}} \sum_{i\in[n]}f(m_i)$ \T \\
\hline
$H(A), f_m$ & $f_m(x) = x \log \frac{m}{x}$ and $H(A)\equiv \bar{f_m}(A)$ \T \\
\hline
$(1+\epsilon, \delta)$-approx. & $\hat{Q}$ is $(1+\epsilon, \delta)$-approximation of $Q$ iff $\Pr[|Q - \hat{Q}| > \epsilon Q ] \leq \delta$ \T \\
\hline
$(1 + \eps)$-approx. & simplified notation for $(1 + \eps, 0)$-approximation \\
\hline
$(\eps, \delta)$-approx. & $\hat{Q}$ is $(\eps, \delta)$-approximation of $Q$ iff $\Pr[ |Q - \hat{Q}| > \eps] \leq \delta$ \T\\
\hline
$\eps$-approx. & simplified notation for $(\eps, 0)$-approximation\\
\hline
$\tilde{O}(\cdot)$ & $\tilde{O}$ suppresses poly$(\log \frac{1}{\eps}, \log \frac{1}{\delta}, \log n, \log m)$ \T\\
\hline
$\text{Est}(f, R, \kappa)$ & defined in Section \ref{sec:AMS:pre} \\
\hline
$\lambda_{f, \mathcal{A}}$ &  see Definition \ref{def:trackable}\\
\hline
$\lambda$ &  when $f$ and $\mathcal{A}$ are clear from context, $\lambda$ is short for $\lambda_{f, \mathcal{A}}$ \\  
\hline
\end{tabular}
\caption{List of notations}
\label{tab:notation} 
\end{table}

\vspace{3mm}

\noindent{\bf Roadmap.}
In Section~\ref{sec:AMS} we show how to implement the AMS sampling in the distributed monitoring model, which will be used in our entropy monitoring algorithms.  We present our improved algorithms for monitoring the Shannon entropy function and the Tsallis entropy function in Section~\ref{sec:Shannon} and Section~\ref{sec:Tsallis}, respectively. We then conclude the paper in Section~\ref{sec:conclude}.


\section{AMS Sampling in Distributed Monitoring}
\label{sec:AMS}

In this section we extend the AMS sampling algorithm to the distributed monitoring model.  We choose to present this implementation in a general form so that it can be used for tracking both the Shannon entropy and the Tsallis entropy. We will discuss both the infinite window case and the sliding window cases. 
\vspace{3mm}

\noindent{\bf Roadmap.} We will start by introducing some tools from previous work, and then give the algorithms for the infinite window case, followed by the analysis.  We then discuss the sliding window cases.

\subsection{Preliminaries}
\label{sec:AMS:pre}
Recall the AMS sampling framework sketched in the introduction. Define $\text{\Est}(f, R, \kappa) = \frac{1}{\kappa} \sum_{i \in [\kappa]} X_i$, where $\{X_1, \ldots, X_\kappa\}$ are i.i.d.\ sampled from the distribution of $X = f(R) - f(R-1)$. The following lemma shows that for a sufficiently large $\kappa$, $\text{\Est}(f, R, \kappa)$ is a good estimator of $\E[X]$.

\begin{lemma}[\cite{CCM10}]
  \label{lem:c}
  Let $a\geq 0, b > 0$ such that $-a \leq X \leq b$, and
  \begin{equation}
    \label{eq:inq:c}
    \kappa \geq \frac{3(1 + a/\E[X])^2\eps^{-2}\ln(2\delta^{-1})(a + b)}{(a + \E[X])}.
  \end{equation}
If $\E[X] > 0$, then $\text{\Est}(f, R, \kappa)$ gives a $(1 + \eps, \delta)$-approximation to $\E[X] = \bar{f}(A)$.
\end{lemma}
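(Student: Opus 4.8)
The plan is to reduce the statement to a standard multiplicative Chernoff bound by first shifting the summands so that they become non-negative. Write $\mu = \E[X]$ and let $\bar{X} = \Est(f, R, \kappa) = \frac{1}{\kappa}\sum_{i\in[\kappa]} X_i$. Since each $X_i$ satisfies $-a \le X_i \le b$, I would set $Y_i = X_i + a$, so that $0 \le Y_i \le a + b$ and the $Y_i$ are i.i.d.\ with mean $\nu := \E[Y_i] = \mu + a > 0$. Averaging gives $\bar{Y} := \frac{1}{\kappa}\sum_{i\in[\kappa]} Y_i = \bar{X} + a$, hence $\bar{X} - \mu = \bar{Y} - \nu$, and the event $\abs{\bar{X} - \mu} > \eps\mu$ is exactly the event $\abs{\bar{Y} - \nu} > \eps\mu$. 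This shift is the key idea: it converts the bounded-but-signed $X_i$ into non-negative variables to which the multiplicative Chernoff bound applies directly.

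Next I would rescale to the unit interval and invoke the bound. Putting $Z_i = Y_i/(a+b) \in [0,1]$ and $S = \sum_{i\in[\kappa]} Z_i$ with $M := \E[S] = \kappa\nu/(a+b)$, the two-sided multiplicative Chernoff bound gives, for any $0 < \gamma \le 1$,
$$\Pr\big[\abs{S - M} > \gamma M\big] \le 2\exp\!\left(-\gamma^2 M/3\right).$$
The event $\abs{\bar{Y} - \nu} > \gamma\nu$ coincides with $\abs{S-M} > \gamma M$, so I would choose $\gamma$ so that the deviation $\gamma\nu$ equals the target deviation $\eps\mu$; that is, $\gamma = \eps\mu/\nu = \eps\mu/(\mu+a)$. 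Note $\gamma \le \eps \le 1$ because $a \ge 0$, so the bound is applicable. Substituting $M = \kappa(\mu+a)/(a+b)$ yields
$$\Pr\big[\abs{\bar{X} - \mu} > \eps\mu\big] \le 2\exp\!\left(-\frac{\eps^2\mu^2\,\kappa}{3(\mu+a)(a+b)}\right).$$

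Finally I would force the right-hand side to be at most $\delta$. Requiring $\frac{\eps^2\mu^2\kappa}{3(\mu+a)(a+b)} \ge \ln(2\delta^{-1})$ and solving for $\kappa$ gives
$$\kappa \ge \frac{3(\mu+a)(a+b)\ln(2\delta^{-1})}{\eps^2\mu^2}.$$
Using the identity $(\mu+a)/\mu^2 = (1+a/\mu)^2/(\mu+a)$, this is precisely the hypothesis \eqref{eq:inq:c}, so under that assumption on $\kappa$ we obtain the claimed $(1+\eps, \delta)$-approximation; the equality $\E[X] = \bar{f}(A)$ was already established in \cite{AMS99}.

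Since the argument is a direct application of a textbook concentration inequality, I do not expect a serious obstacle. The only thing to be careful about is the bookkeeping that produces the factor $(1+a/\mu)^2$: one must track how the multiplicative error $\eps\mu$ on $X$ becomes the smaller relative error $\gamma = \eps\mu/(\mu+a)$ on the shifted mean $\nu$, and verify $\gamma \le 1$ so that the Chernoff bound is valid.
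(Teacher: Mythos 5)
Your proof is correct: the shift $Y_i = X_i + a$, the rescaling to $[0,1]$, the choice $\gamma = \eps\E[X]/(\E[X]+a) \le 1$, and the algebraic identity $(1+a/\E[X])^2/(a+\E[X]) = (\E[X]+a)/\E[X]^2$ all check out and yield exactly the stated bound on $\kappa$. The paper itself does not prove this lemma (it imports it from \cite{CCM10}), and your argument is precisely the standard shift-and-Chernoff proof given in that reference, so there is nothing further to reconcile.
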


We will also make use of the following tools from the previous work in distributed monitoring.
\vspace{3mm}

\noindent{\bf \CEN.}
A simple (folklore) algorithm for counting the frequency of a given element in distribution monitoring is the following: Each site $S_i$ maintains a local counter $ct_i$, initiated to be $1$. Every time $ct_i$ increases by a factor of $(1+\eps)$, $S_i$ sends a message (say, a signal bit) to the coordinator. It is easy to see that the coordinator can always maintain a $(1+\eps)$-approximation of $\sum_i ct_i$, which is the frequency of the element. The total communication cost can be bounded by $O(k \cdot \log_{1+\eps} m) = O(k/\eps \cdot \log m)$ bits. The space used at each site is $O(\log m)$ bits and the processing time per item is $O(1)$. We denote this algorithm by \CEN$(e, \eps)$, where $e$ is the element whose frequency we want to track.

The pseudocode of \CEN ~is presented in Appendix \ref{app:CEN}.

\vspace{3mm}

\noindent{\bf \CE.}
Huang et al.\ \cite{HYZ12} proposed a randomized algorithm \CE\ with a better performance. We summarize their main result in the following lemma. 

\begin{lemma}[\cite{HYZ12}]
  \label{lem:CE}
Given an element $e$, $\text{\CE}(e, \eps,\delta)$ maintains a $(1+\epsilon, \delta)$-approximation to $e$'s frequency at the coordinator, using $O\left( (k + \frac{\sqrt{k}}{\epsilon})\log \frac{1}{\delta} \log^2 m \right)$ bits of communication, $O(\log m \log \frac{1}{\delta})$ bits space per site and amortized $O(\log \frac{1}{\delta})$ processing time per item.
\end{lemma}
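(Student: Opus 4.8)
\noindent\emph{Proof proposal.} The plan is to reduce the lemma to continuously tracking a single distributed, monotonically increasing counter, and then to beat the folklore $O(k/\eps)$ bound of \CEN\ by a variance-reduction argument that replaces a worst-case factor $k$ with a factor $\sqrt{k}$. Tracking the frequency of a fixed element $e$ is exactly this counting problem: each site $S_i$ keeps the number $c_i$ of occurrences of $e$ it has seen so far, and the coordinator wants a $(1+\eps)$-approximation of $N=\sum_i c_i$ at every time step. I would run the protocol in $O(\log m)$ rounds, where round $j$ covers the period during which $N \in [2^j, 2^{j+1})$. At each round boundary the coordinator broadcasts the new scale to all $k$ sites so that they can reset their local bookkeeping; since there are $O(\log m)$ rounds and each broadcast costs $O(k)$ bits, this contributes the additive $\tilde{O}(k)$ term. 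Because $N$ changes by at most a constant factor within a round, it suffices to track the running count in a round with additive error $O(\eps N_0)$, where $N_0 \approx 2^j$, and this additive guarantee is equivalent to the desired $(1+\eps)$ multiplicative guarantee.

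The heart of the argument is the per-round estimator. The naive deterministic scheme has each site report once every $b$ new occurrences, and the coordinator estimates $N$ as (messages received) $\times\, b$. Its worst-case error is $kb$, since each of the $k$ sites may be carrying up to $b$ unreported items, forcing $b = \eps N_0/k$ and hence $N_0/b = k/\eps$ messages in the round, essentially recovering \CEN. To do better, I would make each site report an \emph{unbiased} estimate of its own $c_i$ with error bounded by $b$, using randomized rounding of the within-block remainder (report an extra token with probability equal to the fractional part $(c_i \bmod b)/b$). Now the total error is a sum of $k$ independent, mean-zero random variables, each supported on $[-b,b]$. By a Bernstein/Chernoff bound this sum is $O(\sqrt{k}\, b\,\sqrt{\log(1/\delta)})$ with probability $1-\delta$, rather than $kb$. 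Setting $\sqrt{k}\, b = \Theta(\eps N_0)$ gives $b = \Theta(\eps N_0/\sqrt{k})$ and only $N_0/b = O(\sqrt{k}/\eps)$ messages per round; summing over the $O(\log m)$ rounds yields the claimed $\tilde{O}(k + \sqrt{k}/\eps)$.

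Two points remain. First, the coordinator must hold a valid estimate at \emph{every} time step and the error bound must hold simultaneously at all of them, so I would invoke the tail bound only at the $O(\sqrt{k}/\eps)$ moments per round when the running estimate actually changes, and union-bound over these updates and the $O(\log m)$ rounds; since the message requirement depends only logarithmically on the target failure probability, running the estimator with $\delta' = \delta/U$ (where $U$ is the number of such events) inflates the cost by only a logarithmic factor, which is absorbed into the $\tilde{O}$. Second, each site must maintain its unbiased token count online without communicating on every item: it keeps only its local counter and the current scale (hence $\tilde{O}(1)$ space and amortized $O(1)$ time per item), and sends a message only when its token count changes.

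The step I expect to be the main obstacle is precisely the design and analysis of the unbiased, bounded-variance per-site estimator: one needs the $k$ per-site errors to be genuinely independent and mean-zero with variance $O(b^2)$, not $O(c_i b)$, which is what naive per-item Bernoulli sampling gives and which only recovers the $1/\eps^2$ bound, while keeping the estimator continuously maintainable with no per-item communication. Everything else, namely the round structure, the conversion from additive to multiplicative error, and the union bound over time steps, is routine bookkeeping once this estimator is in place.
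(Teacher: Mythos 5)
This lemma is imported verbatim from Huang, Yi and Zhang \cite{HYZ12}; the paper contains no proof of it, so there is nothing internal to compare against. Judged against the known argument, your sketch is essentially a correct reconstruction of the \CE{} count-tracking protocol: geometric rounds with an $O(k)$ broadcast at each of the $O(\log m)$ round boundaries (the additive $\tilde{O}(k)$ term), plus a per-round estimator in which each site's reporting error is an independent, mean-zero random variable bounded by $b$, so that Hoeffding replaces the worst-case $kb$ drift by $O(\sqrt{k}\,b\sqrt{\log(1/\delta)})$ and forces only $O(\sqrt{k}/\eps)$ messages per round. That is exactly the source of the $\sqrt{k}/\eps$ term in \cite{HYZ12}.

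The one piece you rightly flag as the crux --- an unbiased, $O(b^2)$-variance, continuously maintainable per-site estimator --- is resolved in \cite{HYZ12} by a slightly different mechanism than your ``randomized rounding of the remainder'': there, on each local increment the site sends its \emph{current counter value} with probability $p \approx 1/b$, and the coordinator uses $\hat{c}_i = \tilde{c}_i + 1/p - 1$ (with $\tilde{c}_i$ the last received value), which is unbiased because the unreported suffix has a truncated-geometric length of mean $1/p-1$, and has variance $O(1/p^2)$ independently across sites. Your alternative --- a uniformly random offset $r_i \in [0,b)$ with a token sent at each threshold crossing $r_i, r_i+b, r_i+2b,\dots$ --- also yields an unbiased estimate with error deterministically in $(-b,b)$ at \emph{every} time step, which in fact makes your union-bound step cleaner than you state it: because the per-site error is always bounded and mean-zero, you can union-bound over all $m$ time steps at only a $\log m$ cost in the exponent, rather than restricting attention to the moments when the coordinator's estimate changes (between such moments the true count can still grow by up to $kb$, so checking only at update moments would not by itself suffice). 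With that small repair, the sketch is sound and matches the cited bound, including the $\tilde{O}(1)$ space and amortized $O(1)$ time per site.
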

For $V \subseteq [n]$, \CE$(V, \eps, \delta)$ maintains a $(1 + \eps, \delta)$-approximation of $m_V = \sum_{i\in V}m_i$, the total frequencies of elements in $V$. Similarly,  \CEN$(V, \eps)$ maintains a $(1 + \eps)$-approximation of $m_V$.
\vspace{3mm}

\subsection{The Algorithms}
To describe the algorithms, we need to introduce a positive ``constant'' $\lambda$ which depends on the property of the function to be tracked. As mentioned that different from the streaming model where we can maintain $R$ exactly, in distributed monitoring we can only maintain an approximation of $S$'s tail frequency $R$. For a function $f$, recall that $\bar{f}(A) = \E[X] = \E[f(R) - f(R-1)]$.  The observation is that,  if $\hat{X} = f(\hat{R})-f(\hat{R}-1)$ is very close to $X = f(R) - f(R-1)$ when $\hat{R}$ is close to $R$, then $\Est(f, \hat{R}, \kappa)$ will be a relatively accurate estimation of $\Est(f, R, \kappa)$ (hence $\E[X]$).  To be more precise, if $\hat{R}\in \mathbb{Z}^+$ is a $(1 + \eps)$-approximation to $R$, we hope $|X - \hat{X}|$ can be bounded by
\begin{equation}
  \label{eq:trackable-ineq}
  |X - \hat{X}| \leq \lambda \cdot \eps \cdot X.  
\end{equation}

Unfortunately, for some functions there is no such $\lambda$. For example, let us consider $f(x)= x \log \frac{m}{x}$ (the function for the Shannon entropy)  and  $A = \{1,1, \ldots, 1\}$. If (\ref{eq:trackable-ineq}) holds for some positive $\lambda$, we have $\E[|X - \hat{X}|] \leq \lambda \cdot \eps \cdot\E[X] = \lambda \cdot \eps \cdot\bar{f}(A) = 0$, which is clearly not possible.  

To fix above issue, we can get rid of ``bad inputs'' (we can handle them using other techniques) by putting our discussion of $\lambda$ under a restricted input class. That is, the constant $\lambda$ depends on both the function $f$ and the set of possible inputs $\mathcal{A}$. Formally, we introduce $\lambda_{f, \mathcal{A}}$ (the subscript emphasizes that $\lambda$ depends on both $f$ and $\mathcal{A}$) as following,  

\begin{definition}[$\lambda_{f, \mathcal{A}}$]
\label{def:trackable}
Given a function $f:\mathbb{N} \to \mathbb{R}^+ \cup \{0\}$ with $f(0) = 0$ and a class of inputs $\mathcal{A}$, we define $\lambda_{f, \mathcal{A}}$ be the smallest $\lambda$ that satisfies the following: 

\begin{itemize}
\item $\lambda \geq 1$,
\item for any $A \in \mathcal{A}$, let $(S, R) \sAMS A$, for any positive number $\eps \le 1/4$ and any $\hat{R}$ that is a $(1+\eps)$-approximation of $R$, we have 
\begin{equation}
  \label{eq:def:trackable}
 |X - \hat{X}|  \le \lambda \cdot \eps \cdot X,
\end{equation}
where $X$ and $\hat{X}$ equal $f(R) - f(R-1)$ and $f(\hat{R}) - f(\hat{R} - 1)$ respectively.
\end{itemize}
\end{definition}

When $f$ and $\mathcal{A}$ are clear from the context, we often write $\lambda_{f, \mathcal{A}}$ as $\lambda$. $\lambda$  measures the approximation error introduced by using the approximation $\hat{R}$ when  estimating $\E[X] = \bar{f}(A)$ under the \emph{worst-case} input $A \in \mathcal{A}$. We will see soon that the efficiency of our AMS sampling algorithm is directly related to the value of $\lambda$. 

\begin{remark}
Directly calculating $\lambda$ based on $f$ and $\mathcal{A}$ may not be easy, but for the purpose of bounding the complexity of the AMS sampling, it suffices to calculate a relatively tight upper bound of $\lambda_{f, \mathcal{A}}$; examples can be found in Section~\ref{sec:analysis} and Section~\ref{sec:Tsallis} when we apply this algorithm framework to entropy functions.
\end{remark}

We now show how to maintain a single pair $(S, \hat{R})\sAMS A$ (we use $\hat{R}$ because we can only track $R$ approximately). The algorithms are presented in Algorithm~\ref{algo:ReceiveItemAMS} and \ref{algo:SampleUpdateAMS}.
\medskip

\begin{algorithm}[H]
intialize $S = \perp, r(S) = +\infty$\;
\ForEach{$e$ received}{
  sample $r(e) \in_R (0,1)$\;
  \lIf {$r(e) < r(S)$}{
      send $(e, r(e))$ to the coordinator
    }
}
\caption{ Receive an item at a site}
\label{algo:ReceiveItemAMS}
\end{algorithm}

\begin{algorithm}[H]
\ForEach {$(e, r(e))$ received}{
  update $S \gets e$, $r(S) \gets r(e)$\;
  restart $\hat{R} \leftarrow \text{\CEN}(S, \frac{\eps}{3\lambda})$\;
  broadcast new $(S, r(S))$ to all sites and each site updates their local copy.
}
\caption{ Update a sample at the coordinator}
\label{algo:SampleUpdateAMS}
\end{algorithm}

\begin{itemize}
\item {\em Maintain $S$}: Similar to that in \cite{CMYZ12}, we randomly associate each incoming item  $a$ with a real number~\footnote{In practice, one can generate a random binary string of, say, $10\log m$ bits as its rank, and w.h.p. all ranks will be different.} $r(a) \in (0,1)$ as its rank. We maintain $S$ to be the item with the smallest rank in $A(t)$ at any time step $t$. Each site also keeps a record of $r(S)$, and only sends items with ranks smaller than $r(S)$ to the coordinator. Each time $S$ getting updated, the coordinator broadcasts the new $S$ with its rank $r(S)$ to all the $k$ sites.

\item {\em Maintain $R$}: Once $S$ is updated, we use \CEN$(S, \frac{\eps}{3 \lambda})$ to keep track of its tail frequency $R$ up to a factor of $(1+\frac{\eps}{3\lambda})$.

\end{itemize}

To present the final algorithm, we need to calculate $\kappa$, the number of copies of $(S, \hat{R}) \sAMS A$ we should maintain at the coordinator. Consider a fixed function $f$ and an input class $\mathcal{A}$. Recall that in Lemma \ref{lem:c}, $a, b$ and $\E[X]$ all depend on the input stream $A \in \mathcal{A}$ because the distribution of $X$ is determined by the input stream. To minimize the communication cost, we want to keep $\kappa$ as small as possible while Inequality (\ref{eq:inq:c}) holds for all input streams in $\mathcal{A}$. Formally, given an input stream $A\in\mathcal{A}$, we define $\pi(A)$ as
\begin{equation}
\label{eq:pi}
\begin{aligned}
& \underset{a, b}{\text{minimize}}
& & \frac{3(1 + a/\E[X])^2(a + b)}{(a + \E[X])}\\
& \text{subject to}
& & a \geq 0, \\
&&&  b > 0,\\
&&& -a \leq X \leq b ~(\forall X).
\end{aligned}
\end{equation}
Then $\kappa$ takes the upper bound of $\eps^{-2}\ln(2\delta^{-1})\pi(A)$ over $A\in\mathcal{A}$, that is, 
\begin{equation}
\label{eq:kappa}
    \kappa(\eps, \delta, \mathcal{A}) = \eps^{-2}\ln(2\delta^{-1}) \cdot \sup_{A\in\mathcal{A}} \pi(A).
\end{equation}

One way to compute $\sup_{A\in\mathcal{A}}\pi(A)$, as we will do in the proof of Lemma \ref{lem:entropy-1}, is to find specific values for $a$ and $b$ such that under arbitrary stream $A\in\mathcal{A}$, $ -a \leq X \leq b$ holds for all $X$. We further set $E = \inf_{A\in\mathcal{A}}\E[X]$, then an upper bound of $\sup_{A\in\mathcal{A}}\pi(A)$ is given by $O(\frac{(1 + a/E)^2(a + b)}{(a + E)})$. 


Our algorithm then maintains $\kappa = \kappa(\frac{\eps}{2}, \delta, \mathcal{A})$ copies of $(S, \hat{R}) \sAMS A$ at the coordinator. At each time step, the coordinator computes \Est$(f, \hat{R}, \kappa)$. We present the main procedure  in Algorithm \ref{algo:TrackingF}.
\medskip

\begin{algorithm}[H]
\tcc*[h]{$(S, \hat{R})\sAMS A$ are maintained via Algorithm \ref{algo:ReceiveItemAMS}, \ref{algo:SampleUpdateAMS}}

track $\kappa(\frac{\eps}{2}, \delta, \mathcal{A})$ (defined in Equation~(\ref{eq:kappa})) copies of $(S,\hat{R})\sAMS A$ in parallel\;
\Return the average of all $\left( f(\hat{R}) - f(\hat{R} - 1) \right)$\; 
\caption{{\bf TrackF$(\eps, \delta)$}: Track $\bar{f}(A)$ at the coordinator}
\label{algo:TrackingF}
\end{algorithm}

\subsection{The Analysis}
We prove the following result in this section.
\begin{theorem}
\label{thm:ams-inf}
For any function $f:\mathbb{N} \to \mathbb{R}^+ \cup \{0\}$ with $f(0) = 0$ and input class $\mathcal{A}$, Algorithm \ref{algo:TrackingF} maintains at the coordinator a $(1 + \eps, \delta)$-approximation to $\bar{f}(A)$ ~for any $A \in \mathcal{A}$, using 
\begin{equation}
  \label{eq:inf-cost}
   O \left (k/\eps^3 \cdot \lambda\cdot \sup_{A\in\mathcal{A}}\pi(A)\cdot \log \frac{1}{\delta}\log^2 m\right )
\end{equation}
 bits of communication, $O\left(\kappa(\frac{\eps}{2}, \delta, \mathcal{A})\cdot\log m \right)$ bits space per site, and amortized $O\left(\kappa(\frac{\eps}{2}, \delta, \mathcal{A})\right)$ time per item, where $\pi(A), \kappa(\frac{\eps}{2}, \delta, \mathcal{A})$ are defined in (\ref{eq:pi}) and (\ref{eq:kappa}) respectively.
\end{theorem}

We first show the correctness of Algorithm \ref{algo:TrackingF}, and then analyze the costs.

\paragraph{Correctness.}
The following lemma together with the property of $\hat{R}$ gives the correctness of Algorithm~\ref{algo:TrackingF}.
\begin{lemma}
\label{lem:inf-correctness} For any $f : \mathbb{N} \to \mathbb{R}^+ \cup \{0\}$ with $f(0) = 0$ and input class $\mathcal{A}$, set $\kappa = \kappa(\frac{\eps}{2}, \delta, \mathcal{A})$. If $\hat{R}$ is a $(1 + \frac{\eps}{3\lambda})$-approximation to $R$, then \Est$(f, \hat{R}, \kappa)$ is a $(1 + \eps, \delta)$-approximation to $\bar{f}(A), \forall A \in \mathcal{A}$.
\end{lemma}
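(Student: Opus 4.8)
The plan is to decompose the total error of $\text{\Est}(f, \hat{R}, \kappa)$ into two pieces: the statistical error of averaging $\kappa$ i.i.d.\ samples, and the systematic error introduced by replacing the exact tail frequency $R$ with its approximation $\hat{R}$. First I would invoke Lemma~\ref{lem:c} with error parameter $\eps/2$. Since $\kappa = \kappa(\frac{\eps}{2}, \delta, \mathcal{A})$ is, by Equation~(\ref{eq:kappa}), large enough that Inequality~(\ref{eq:inq:c}) holds with $\eps/2$ in place of $\eps$ for every $A \in \mathcal{A}$, Lemma~\ref{lem:c} guarantees that the \emph{idealized} estimator built from the exact samples $X_i = f(R_i) - f(R_i - 1)$, namely $\text{\Est}(f, R, \kappa) = \frac{1}{\kappa}\sum_{i\in[\kappa]} X_i$, is a $(1+\eps/2, \delta)$-approximation to $\E[X] = \bar{f}(A)$. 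In particular, on an event $\mathcal{E}$ of probability at least $1-\delta$ we have $\abs{\text{\Est}(f, R, \kappa) - \bar{f}(A)} \le \frac{\eps}{2}\bar{f}(A)$, and hence $\text{\Est}(f, R, \kappa) \le (1+\frac{\eps}{2})\bar{f}(A)$.

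Next I would control the gap between the actual estimator $\text{\Est}(f, \hat{R}, \kappa)$ and the idealized one, sample by sample. The two are coupled: for each $i$, both $X_i$ and $\hat{X}_i = f(\hat{R}_i) - f(\hat{R}_i - 1)$ come from the same sampling instance, with $\hat{R}_i$ a $(1+\eps_{f,\mathcal{A}})$-approximation of $R_i$ by hypothesis. Plugging $\eps_{f,\mathcal{A}} = \frac{\eps}{3\lambda_{f,\mathcal{A}}}$ into Definition~\ref{def:trackable} gives, for every $i$, $\abs{X_i - \hat{X}_i} \le \lambda_{f,\mathcal{A}}\cdot \eps_{f,\mathcal{A}} \cdot X_i = \frac{\eps}{3} X_i$. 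Since each $X_i \ge 0$ on the class $\mathcal{A}$, summing and averaging yields
\begin{equation}
\abs{\text{\Est}(f, \hat{R}, \kappa) - \text{\Est}(f, R, \kappa)} \le \frac{1}{\kappa}\sum_{i\in[\kappa]} \abs{\hat{X}_i - X_i} \le \frac{\eps}{3}\cdot \frac{1}{\kappa}\sum_{i\in[\kappa]} X_i = \frac{\eps}{3}\,\text{\Est}(f, R, \kappa).
\end{equation}

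Finally I would combine the two bounds by the triangle inequality on the event $\mathcal{E}$, using $\text{\Est}(f, R, \kappa) \le (1+\frac{\eps}{2})\bar{f}(A)$:
\begin{equation}
\abs{\text{\Est}(f, \hat{R}, \kappa) - \bar{f}(A)} \le \frac{\eps}{3}\,\text{\Est}(f, R, \kappa) + \frac{\eps}{2}\bar{f}(A) \le \Big(\frac{\eps}{3}\big(1+\tfrac{\eps}{2}\big) + \frac{\eps}{2}\Big)\bar{f}(A).
\end{equation}
The coefficient equals $\frac{5\eps}{6} + \frac{\eps^2}{6}$, which is at most $\eps$ whenever $\eps \le 1$; this establishes that $\text{\Est}(f, \hat{R}, \kappa)$ is a $(1+\eps, \delta)$-approximation to $\bar{f}(A)$, as claimed.

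I do not expect a genuine obstacle here; the argument is a clean two-source error decomposition, with the $\delta$ failure probability coming only from Lemma~\ref{lem:c} and the $\hat{R}$-induced error being deterministic once the approximation guarantee is assumed. The delicate point is the constant bookkeeping: the choices $\eps_{f,\mathcal{A}} = \eps/(3\lambda_{f,\mathcal{A}})$ and $\kappa(\eps/2, \cdot, \cdot)$ are tuned precisely so that the per-sample error $\frac{\eps}{3}$ and the statistical error $\frac{\eps}{2}$, after the $(1+\eps/2)$ inflation of the idealized estimator, add up to at most $\eps$. I would also verify the two implicit side conditions that make the steps legitimate: that $X \ge 0$ over $\mathcal{A}$ (needed to pass from the pointwise bound $\abs{\hat{X}_i - X_i} \le \frac{\eps}{3}X_i$ to a bound on the averages, and guaranteed since $\lambda_{f,\mathcal{A}}$ is finite only when no ``bad'' streams force $X<0$), and that $\eps_{f,\mathcal{A}} \le 1/4$ so that Definition~\ref{def:trackable} applies, which holds for $\eps$ below a small constant since $\eps_{f,\mathcal{A}} \le \eps/3$.
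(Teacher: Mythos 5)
Your proposal is correct and follows essentially the same route as the paper's own proof: Definition~\ref{def:trackable} with $\eps_{f,\mathcal{A}}=\eps/(3\lambda_{f,\mathcal{A}})$ gives the per-sample bound $|X-\hat{X}|\le\frac{\eps}{3}X$ and hence $|\Est(f,\hat{R},\kappa)-\Est(f,R,\kappa)|\le\frac{\eps}{3}\Est(f,R,\kappa)$, Lemma~\ref{lem:c} with parameter $\eps/2$ handles the statistical error, and the two combine to the factor $(1+\frac{\eps}{2})(1+\frac{\eps}{3})-1=\frac{5\eps}{6}+\frac{\eps^2}{6}\le\eps$, exactly as in the paper. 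Your extra remarks on the sample-by-sample coupling and the nonnegativity of $X_i$ only make explicit what the paper leaves implicit.
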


\begin{proof}
By Definition~\ref{def:trackable}, the fact ``$\hat{R}$ is a $(1 + \frac{\eps}{3\lambda})$-approximation to $R$'' implies   $|X - \hat{X}| \leq \frac{\eps}{3} X$, hence
\begin{equation}
  \label{eq:delta1}
  |\Est(f, R, \kappa) - \Est(f, \hat{R}, \kappa)| \leq \frac{\eps}{3}\Est(f, R, \kappa)
\end{equation}
By Lemma \ref{lem:c}, our choice for $\kappa$ ensures that $\Est(f, R, \kappa)$ is a $(1+\eps/2, \delta)$-approximation to $\E[X]$, that is, with probability at least $1-\delta$, we have 
\begin{equation}
  \label{eq:delta2}
  |\Est(f, R, \kappa) - \E[X]| \leq \frac{\eps}{2} \E[X].
\end{equation}
Combining (\ref{eq:delta1}) and (\ref{eq:delta2}), we obtain 
$$|\Est(f, \hat{R}, \kappa) - \E[X]| \leq \left(\left(1+\frac{\eps}{2}\right) \left(1 + \frac{\eps}{3}\right) - 1\right) \E[X] \leq \eps \E[X].$$ We thus conclude that $\Est(f, \hat{R}, \kappa)$ is a $(1 + \eps, \delta)$-approximation to $\E[X]$ for any input stream $A \in \mathcal{A}$.
\end{proof}

\paragraph{Costs.} By \CEN, tracking $\hat{R}$  as a $(1 + \eps)$-approximation to $R$ for each sample $S$ costs $O(\frac{k}{\eps}\log m)$ bits. We show in the following technical lemma (whose proof we deferred to Appendix \ref{app:lem:bd}) that the total number of updates of $S$ is bounded by $O(\log m)$ with high probability. Thus the total bits of communication to maintain one copy of $\hat{R}$ can be bounded by $O(\frac{k}{\eps}\log^2 m)$. 

\begin{lemma}
\label{lem:bd}
Let $U_1, \ldots, U_m$ be random i.i.d samples from $(0,1)$. 
Let $J_1 = 1$; for $i \ge 2$, let $J_i = 1$ if $U_i < \min\{U_1, \ldots, U_{i-1}\}$ and $J_i = 0$ otherwise. Let $J = \sum_{i \in [m]} J_i$. Then  $J_1,J_2,\ldots,J_m$ are independent,
and $\Pr[J > 2 \log m] < m^{-1/3}$.
\end{lemma}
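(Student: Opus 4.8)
The plan is to identify each $J_i$ as the indicator that $U_i$ is a \emph{running minimum} (equivalently a record, or left-to-right minimum) of the sequence, and then to invoke two classical facts about records in an i.i.d.\ continuous sequence: the record indicators are mutually independent, and $\Pr[J_i=1]=1/i$. Granting these, the tail bound reduces to a one-line moment-generating-function estimate.

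First I would reduce everything to the relative order of the samples. As the $U_i$ are i.i.d.\ from a continuous law, they are almost surely distinct, and their rank vector is a uniformly random permutation of $[m]$. Since the $U_i$ are a.s.\ distinct, $\{J_i = 1\}$ is exactly the event $U_i = \min\{U_1,\ldots,U_i\}$, i.e.\ that $U_i$ has rank $1$ among the first $i$ values; in particular $J_i$ depends only on the relative order. Writing $\sigma_i = \abs{\{j \le i : U_j \le U_i\}}$ for the rank of $U_i$ among the first $i$ samples, we have $J_i = \mathbf{1}[\sigma_i = 1]$.

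The heart of the argument --- and the step I expect to be the main obstacle --- is independence. I would obtain it by proving the stronger statement that $\sigma_2,\ldots,\sigma_m$ are independent with $\sigma_i$ uniform on $\{1,\ldots,i\}$. This is the Lehmer-code (factorial-base) encoding of a permutation: the map from a permutation to its vector of insertion-ranks is a bijection onto $\prod_{i=2}^{m}\{1,\ldots,i\}$, so a uniformly random permutation corresponds precisely to independent uniform coordinates $\sigma_i$. Because each $J_i$ is a function of the single variable $\sigma_i$, the $J_i$ are independent, and $\Pr[J_i=1]=\Pr[\sigma_i=1]=1/i$; the constant $J_1=1$ is trivially independent of the rest.

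For the tail I would use the base-$2$ moment generating function, which telescopes neatly. By independence,
$$\E\!\left[2^{J}\right] = \prod_{i=1}^{m}\E\!\left[2^{J_i}\right] = \prod_{i=1}^{m}\left(1 + \frac{1}{i}\right) = \prod_{i=1}^{m}\frac{i+1}{i} = m+1.$$
Markov's inequality applied to the nonnegative variable $2^J$ then yields
$$\Pr[J > 2\log m] = \Pr\!\left[2^{J} > m^{2}\right] \le \frac{\E[2^{J}]}{m^{2}} = \frac{m+1}{m^{2}} < m^{-1/3},$$
where the final inequality holds for all $m \ge 2$. This in fact beats the stated bound; the only caveat is the trivial edge case $m=1$, which does not affect the asymptotic claim.
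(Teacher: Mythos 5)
Your proof is correct, and both halves take a genuinely different route from the paper's. For independence, the paper argues analytically: it introduces $T_{i-1}=\min\{U_1,\ldots,U_{i-1}\}$, shows that the order type $\text{OD}_{i-1}$ of the first $i-1$ samples is independent of the event $\{T_{i-1}>t\}$, and then passes to a limit in $t$ to conclude that $J_i$ is independent of $\text{OD}_{i-1}$ (hence of $J_1,\ldots,J_{i-1}$, which are determined by $\text{OD}_{i-1}$); the probability $\Pr[J_i=1]=1/i$ is computed by an explicit integral. You instead reduce to the uniformly random permutation of relative ranks and invoke the sequential-rank (Lehmer-code) bijection onto $\prod_{i=1}^{m}\{1,\ldots,i\}$, which delivers independence of the $\sigma_i$ and $\Pr[\sigma_i=1]=1/i$ in one stroke; this is cleaner and avoids the limiting/conditioning argument, at the price of relying on the fact that the insertion-rank map is a bijection --- standard, but worth a sentence of justification (injectivity by sequential reconstruction of the permutation, plus the count $\prod_{i=1}^{m} i = m!$). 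For the tail, the paper simply cites ``a Chernoff bound'' applied to $\E[J]=\sum_{i\in[m]}1/i\approx\log m$, leaving the constants to the reader; you make this explicit with the base-$2$ moment generating function, whose product telescopes to $m+1$, and Markov's inequality, obtaining the stronger and fully explicit bound $(m+1)/m^{2}$, which indeed implies $m^{-1/3}$ for all $m\ge 2$. Your version is self-contained and quantitatively sharper; the paper's is shorter but less explicit on the tail.
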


We will ignore the failure probability $m^{-1/3}$ in the rest of the analysis since it is negligible in all cases we consider. 







\smallskip

We now bound the total communication cost: we track $\kappa(\frac{\eps}{2}, \delta, \mathcal{A})$ (defined in Equation~(\ref{eq:kappa})) copies of $(S,\hat{R})\sAMS A$ in parallel; and to maintain each such pair, we may restart \CEN\ for $O(\log m)$ times. Recall that the communication cost of each run of \CEN\ is $O(\frac{k \cdot\lambda}{\eps}\log m)$. The total communication cost (\ref{eq:inf-cost}) follows immediately. The space and processing time per item follows by noting the fact that maintaining each copy of $(S, \hat{R})\sAMS A$ needs $O(\log m)$ bits, and each item requires $O(1)$ processing time.   We are done with the proof of Theorem~\ref{thm:ams-inf}.




\bigskip

We can in fact use \CE\ instead of \CEN\ in Algorithm \ref{algo:SampleUpdateAMS} to further reduce the communication cost. The idea is straightforward: 
we simply replace \CEN$(S, \frac{\eps}{3\lambda})$ in Algorithm \ref{algo:SampleUpdateAMS} with \CE$(S, \frac{\eps}{3\lambda}, \frac{\delta}{2\kappa})$. 

\begin{corollary}
  \label{cor:ams-inf-2}
For any function $f:\mathbb{N} \to \mathbb{R}^+ \cup \{0\}$ with $f(0) = 0$ and input class $\mathcal{A}$, there is an algorithm that maintains at the coordinator a $(1+\eps, \delta)$-approximation to $\bar{f}(A)$, and it uses
\begin{equation}
  \label{eq:inf-cc}
   O \left ( \left(k/\eps^2 + \sqrt{k}/\eps^3\right)  \cdot \lambda\cdot \sup_{A\in\mathcal{A}}\pi(A)\cdot\log \frac{1}{\delta} \log \frac{\kappa}{\delta} \log^3m \right )
\end{equation}
 bits of communication, $O(\kappa \log m \log \frac{\kappa}{\delta})$  bits space per site, and amortized $O(\kappa\log \frac{\kappa}{\delta})$ time per item, where $\pi(A)$ and $\kappa = \kappa(\frac{\eps}{2}, \frac{\delta}{2}, \mathcal{A})$ are defined in (\ref{eq:pi}) and (\ref{eq:kappa}) respectively.
\end{corollary}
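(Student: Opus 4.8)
The plan is to keep the entire algorithm of Theorem~\ref{thm:ams-inf} intact and change only the subroutine that tracks each tail frequency: in Algorithm~\ref{algo:SampleUpdateAMS} I would replace each restart of \CEN$(S,\eps_{f,\mathcal{A}})$ by \CE$(S,\frac{\eps}{3\lambda_{f,\mathcal{A}}},\frac{\delta}{2\kappa})$, where $\kappa=\kappa(\frac{\eps}{2},\frac{\delta}{2},\mathcal{A})$ is the number of parallel samples $(S,\hat{R})\sAMS A$ maintained at the coordinator. The only genuinely new issue is that \CE, unlike \CEN, is randomized and may return a bad estimate, so I would reprove Lemma~\ref{lem:inf-correctness} with the failure probability split as $\delta=\frac{\delta}{2}+\frac{\delta}{2}$. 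First, choosing $\kappa=\kappa(\frac{\eps}{2},\frac{\delta}{2},\mathcal{A})$ makes $\Est(f,R,\kappa)$ a $(1+\frac{\eps}{2},\frac{\delta}{2})$-approximation of $\E[X]=\bar{f}(A)$ by Lemma~\ref{lem:c}, exactly as in the deterministic case. Second, each of the $\kappa$ invocations of \CE\ maintains a $(1+\frac{\eps}{3\lambda_{f,\mathcal{A}}})$-approximation of its tail frequency $R$ except with probability $\frac{\delta}{2\kappa}$, so a union bound over the $\kappa$ copies guarantees that all the $\hat{R}$'s are simultaneously good except with probability $\frac{\delta}{2}$.

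Conditioned on this good event, Definition~\ref{def:trackable} with $\eps_{f,\mathcal{A}}=\frac{\eps}{3\lambda_{f,\mathcal{A}}}$ gives $|X_i-\hat{X}_i|\le\frac{\eps}{3}X_i$ for every copy, hence $|\Est(f,R,\kappa)-\Est(f,\hat{R},\kappa)|\le\frac{\eps}{3}\Est(f,R,\kappa)$; combining with the estimation error via the same arithmetic $(1+\frac{\eps}{2})(1+\frac{\eps}{3})-1\le\eps$ used in Lemma~\ref{lem:inf-correctness}, and taking a union bound over the two bad events, $\Est(f,\hat{R},\kappa)$ is a $(1+\eps,\delta)$-approximation of $\bar{f}(A)$. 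For the communication cost I would plug the error parameter $\frac{\eps}{3\lambda_{f,\mathcal{A}}}$ and confidence $\frac{\delta}{2\kappa}$ into Lemma~\ref{lem:CE}: one run of \CE\ costs $\tilde{O}(k+\frac{\sqrt{k}\lambda_{f,\mathcal{A}}}{\eps})$ bits, where the dependence on $\log\frac{1}{\delta'}=O(\log\kappa+\log\frac{1}{\delta})$ is absorbed into the $\tilde{O}$. By Lemma~\ref{lem:bd} each copy restarts \CE\ only $O(\log m)$ times, and there are $\kappa=\tilde{O}(\eps^{-2}\sup_{A\in\mathcal{A}}\pi(A))$ copies, so the total is $\tilde{O}(\eps^{-2}\sup_{A}\pi(A)\cdot(k+\frac{\sqrt{k}\lambda_{f,\mathcal{A}}}{\eps}))=\tilde{O}((\frac{k}{\eps^2}+\frac{\sqrt{k}\lambda_{f,\mathcal{A}}}{\eps^3})\sup_{A}\pi(A))$, which is bounded by (\ref{eq:inf-cc}) since $\lambda_{f,\mathcal{A}}\ge1$.

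The space and time bounds are inherited directly: Lemma~\ref{lem:CE} gives $\tilde{O}(1)$ bits per site and amortized $O(1)$ time per item for each copy, and multiplying by the $\kappa$ parallel copies yields the claimed $\tilde{O}(\kappa)$ space per site and amortized $O(\kappa)$ time per item. I expect the only delicate point to be the accounting of the failure probability: because a union bound must be taken over all $\kappa$ copies, the per-copy confidence has to be sharpened to $\frac{\delta}{2\kappa}$, and I would want to verify carefully that this only inflates the cost of \CE\ by the logarithmic factor $\log\frac{1}{\delta'}$ already hidden inside $\tilde{O}$, so that the $\eps$- and $k$-dependence of the final bound is unaffected. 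Everything else is a transcription of the proof of Theorem~\ref{thm:ams-inf}, with the cost of \CEN\ replaced by that of \CE.
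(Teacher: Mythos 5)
Your proposal is correct and follows essentially the same route as the paper's own proof: replace \CEN\ with \CE$(S,\frac{\eps}{3\lambda_{f,\mathcal{A}}},\frac{\delta}{2\kappa})$, take a union bound over the $\kappa$ copies so that all tail-frequency estimates are simultaneously good except with probability $\frac{\delta}{2}$, reuse the arithmetic of Lemma~\ref{lem:inf-correctness} with the remaining $\frac{\delta}{2}$ budget for the estimator, and multiply the per-run cost of \CE\ by the $O(\log m)$ restarts and $\kappa$ copies. The paper states these steps more tersely, but the decomposition of the failure probability and the cost accounting are identical to yours.
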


\begin{proof}
We track $\kappa(\frac{\eps}{2}, \frac{\delta}{2}, \mathcal{A})$ copies of $\hat{R}$ at the coordinator. Each $\hat{R}$ is tracked by  \CE$(S, \frac{\eps}{3\lambda}, \frac{\delta}{2\kappa})$  so that all $\kappa$ copies of $\hat{R}$ are still $(1 + \frac{\eps}{3\lambda})$-approximation to $R$ with probability at least $(1 - \frac{\delta}{2})$. Following the same arguments as that in Lemma \ref{lem:inf-correctness}, $\Est(f, \hat{R}, \kappa)$ will be a $(1 + \eps, \delta)$-approximation to $\E[X]$.

For the communication cost, recall that the communication cost of each \CE$(S, \frac{\eps}{3\lambda}, \frac{\delta}{2\kappa})$ is $O( (k + \frac{\sqrt{k} \lambda}{\epsilon})\cdot \log \frac{\kappa}{\delta} \log^2 m) = O\left( (k + \frac{\sqrt{k}}{\eps})\cdot \lambda\log \frac{\kappa}{\delta} \log^2 m\right)$ bits (Lemma~\ref{lem:CE}). Since we run $\kappa$ (defined in (\ref{eq:kappa})) copies of \CE\ and each may be restarted for $O(\log m)$ times, the total communication cost is bounded by (\ref{eq:inf-cc}). Each \CE$(S, \frac{\eps}{3\lambda})$ uses $O(\log m\log \frac{\kappa}{\delta})$ space per site and $O(\log \frac{\kappa}{\delta})$ processing time per item. We get the space and time costs immediately.
\end{proof}

\subsection{Sequence-Based Sliding Window}
\label{sec:ams-sliding}
In the sequence-based sliding window case, we are only interested in the last $w$ items received by the system, denoted by $A^w(t) = \{a_j\ |\ j > t - w\}$.

It is easy to extend the AMS sampling step to the sliding window case. Cormode et al.~\cite{CMYZ12} gave an algorithm that maintains $s$ random samples at the coordinator in the sequence-based sliding window setting. This algorithm can be directly used in our case by setting $s = 1$. Similar as before, when the sample $S$ is updated, we start to track its tailing frequency $R$ using \CE. The algorithm is depicted in Algorithm \ref{algo:TrackF-SW}.
\medskip

\begin{algorithm}[H]
$\kappa \gets \kappa(\frac{\eps}{2}, \frac{\delta}{2}, \mathcal{A})$\;
use the sequence-based sliding window sampling algorithm from \cite{CMYZ12} to maintain $\kappa$ independent samples\;\label{line:sampling}
each sample $S$ initiates a \CE$(S, \frac{\eps}{3\lambda}, \frac{\delta}{2\kappa})$ to track a $\hat{R}$. Whenever $S$ is updated, restart \CE \;
\Return the average of all $\left( f(\hat{R}) - f(\hat{R} - 1) \right)$ \;
\caption{{\bf TrackF-SW}$(\eps, \delta)$: Track $\bar{f}(A^w)$ in sequence-based sliding window setting}
\label{algo:TrackF-SW}
\end{algorithm}

\begin{theorem}
\label{thm:sliding}
For any function $f:\mathbb{N} \to \mathbb{R}^+ \cup \{0\}$ with $f(0) = 0$ and input class $\mathcal{A}$, let $\pi$ be defined as in (\ref{eq:pi}) but with $A$ being replaced with $A^w$. Let $\kappa = \kappa(\frac{\eps}{2}, \frac{\delta}{2}, \mathcal{A})$. There is an algorithm for the sequence-based sliding window (with window size $w$) that maintains at the coordinator a $(1 + \eps, \delta)$-approximation to $\overline{f}(A^w)$, using 
\begin{equation*}
  \label{eq:sliding-cost}
   O \left (\left(k/\eps^2 + \sqrt{k}/\eps^3\right) \cdot \lambda\cdot \sup_{A\in\mathcal{A}}\pi(A^w)\cdot \log \frac{1}{\delta}\log\frac{\kappa}{\delta}\log^3 m \right ) 
\end{equation*}
bits of communication, $O(\kappa \log m \log \frac{\kappa}{\delta})$  bits space per site, and amortized $O(\kappa \log \frac{\kappa}{\delta})$ time per item. 
\end{theorem}

\begin{proof}
In \cite{CMYZ12} it is shown that $O(k \log w \log m) = O(k \log^2 m)$ bits of communication is sufficient to maintain a random sample in $A^w$, and each site uses $O(\log m)$ bits space and $O(1)$ processing time per item. The rest of the proof is exactly the same as Corollary~\ref{cor:ams-inf-2}.
\end{proof}

\subsection{Time-Based Sliding Window}
In the time-based sliding window case, we are only interested in the items received in the last $t$ time steps, denoted by $A^t$.  

The algorithm of tracking $\bar{f}(A^t)$ is essentially the same as that in the sequence-based sliding window case (Algorithm \ref{algo:TrackF-SW}), except that in Line \ref{line:sampling} of Algorithm \ref{algo:TrackF-SW}, we use the time-based sampling algorithm from \cite{CMYZ12} instead of the sequence-based sampling algorithm. We summarize the result in the following theorem. Note that compared with Theorem~\ref{thm:sliding}, the only difference  is the extra $\log m$ in the space per site, which is due to the extra $\log m$ factor in the sampling algorithm for the time-based sliding window in \cite{CMYZ12}.

\begin{theorem}
\label{thm:sliding-time}
For any function $f:\mathbb{N} \to \mathbb{R}^+ \cup \{0\}$ with $f(0) = 0$ and input class $\mathcal{A}$, let $\pi$ be defined as in (\ref{eq:pi}) but with $A$ being replaced with $A^t$. Let $\kappa = \kappa(\frac{\eps}{2}, \frac{\delta}{2}, \mathcal{A})$. There is an algorithm for the time-based sliding window (with window size $t$) that maintains at the coordinator a $(1 + \eps, \delta)$-approximation to $\bar{f}(A^t)$, using 
\begin{equation*}
  \label{eq:sliding-cost-time}
   O \left (\left(k/\eps^2 + \sqrt{k}/\eps^3\right) \cdot \lambda\cdot \sup_{A\in\mathcal{A}}\pi(A^t)\cdot \log \frac{1}{\delta}\log\frac{\kappa}{\delta}\log^3 m \right ) 
\end{equation*}
bits of communication, $O(\kappa \log^2 m \log \frac{\kappa}{\delta})$  bits space per site, and amortized $O(\kappa \log \frac{\kappa}{\delta})$ time per item. 
\end{theorem}





\section{Shannon Entropy}
\label{sec:Shannon}
In the  Shannon entropy function we have $f(x) = x \log \frac{m}{x}\ (x > 0)$ and $f(0) = 0$, where $m = \abs{A(t)}$. Let $f_m$ denote this function. In this section, we will show that for arbitrary input $A \in [n]^m$, we can track $\bar{f_m}(A) = \sum_{i\in[n]}\frac{m_i}{m}\log\frac{m}{m_i}$ efficiently, using only $\tilde{O}\left( k/\eps^2 + \sqrt{k}/\eps^3 \right)$ bits of communication. We also obtain similar results for sliding window cases.

To do this, we first show that when only considering a restricted input class $\mathcal{A}'$, $\bar{f_m}(A)\ (A \in \mathcal{A}')$ can be tracked efficiently by directly applying the AMS framework presented in previous section. We then discuss how to track $\bar{f_m}(A)$ under arbitrary input $A \in [n]^m$. 

For technical reasons, we assume $1/m \leq \delta, \eps \leq 1/20$ throughout this section. As mentioned, in distributed monitoring we can only maintain a $(1+\eps)$-approximation of $m$ at the coordinator using $o(m)$ bits of communication, but for the sake of simplifying the presentation, we assume that $m$ can be maintained at the coordinator {\em exactly without any cost}. Appendix \ref{app:why:m} explains why we can make such an assumption. The same assumption is also applied to the analysis of the Tsallis Entropy in Section~\ref{sec:Tsallis}.
\vspace{3mm}

\noindent{\bf Roadmap.} We will again start by introducing some tools from previous work.  We then define the restricted input class $\mathcal{A}'$, and give some intuition on how to track general inputs.  We next give the algorithm for the infinite window case, followed by the analysis.  Finally we discuss the sliding window cases.

\subsection{Preliminaries}
To present our algorithm for the Shannon entropy we need a few more tools from previous work.
\vspace{2mm}

\noindent{\bf\CA.}
Yi and Zhang \cite{YZ09} gave a deterministic algorithm, denoted by \CA$(\eps)$, that can be used to track the empirical probabilities of all universe elements up to an additive approximation error $\eps$ in distributed monitoring. 
We summarize their main result below.

\begin{lemma}[\cite{YZ09}]
  \label{lem:CA}
For any $0 < \eps \leq 1$, \CA$(\eps)$ uses  $O(\frac{k}{\eps}\log^2 m)$ bits of communication,
such that for any element $i \in [n]$, it maintains at the coordinator an estimation $\hat{p_i}$ such that
$\abs{\hat{p_i} - p_i} < \eps$. Each site uses $O(\frac{1}{\eps}\log m)$ bits space and amortized $O(1)$ time per item.
\end{lemma}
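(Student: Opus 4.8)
The plan is to reduce the problem to continuously tracking the \emph{heavy elements} and then to run a round-based additive reporting scheme whose communication budget is shared across all elements. First I would observe that it suffices to track, with additive error $\eps$, only those elements with $p_i \ge \eps$: at any time step at most $1/\eps$ elements satisfy this (since $\sum_i p_i = 1$), and for every other element the coordinator may simply output $\hat p_i = 0$, incurring error strictly less than $\eps$. This immediately suggests that both the communication and the per-site space should scale with $1/\eps$ rather than with $n$.

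Next I would set up a round-based scheme in which the rounds are determined by the geometric growth of the global count $m$: a round spans the time interval during which $m \in [M, 2M)$. Fixing such a round, I would have each site $S_j$ report the local frequency of an element $i$ to the coordinator every time that local frequency increases by an additive threshold $\theta = \Theta(\eps M / k)$. This is the additive analogue of the multiplicative reporting rule used in \CEN; the crucial difference is that here the reporting budget is shared among all elements rather than spent per element. The coordinator maintains $\hat m_i$ as the sum of the most recently reported local counts and outputs $\hat p_i = \hat m_i / m$. For correctness I would argue that at any time each of the $k$ sites withholds fewer than $\theta$ un-reported increments per element, so $0 \le m_i - \hat m_i < k\theta = \Theta(\eps M) = \Theta(\eps m)$; choosing the constant in $\theta$ small enough yields $|\hat p_i - p_i| < \eps$ throughout the round. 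Crossing a round boundary only doubles $M$ (hence relaxes the target error), and the coordinator can re-broadcast the new threshold at a cost of $O(k)$ per round, which is negligible.

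For the communication bound, the key counting step is that within a round the total number of reports across all sites and all elements is at most (items received in the round)$/\theta = \Theta(M)/\Theta(\eps M/k) = \Theta(k/\eps)$, since every report is ``paid for'' by $\theta$ distinct item arrivals and distinct reports consume disjoint blocks of arrivals. Each report carries $O(\log n + \log m)$ bits and there are $O(\log m)$ rounds, giving $\tilde O(k/\eps)$ bits in total. For the space and time bounds I would have each site maintain a Misra--Gries-style frequent-elements summary of size $O(1/\eps)$, so that it only ever tracks and reports elements that are locally heavy, keeping the per-site space at $\tilde O(1/\eps)$ bits and the amortized update time at $O(1)$.

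The step I expect to be the main obstacle is precisely this last one: one must show that the bounded-size local summaries, whose \emph{own} approximation error must be folded into the $\eps m$ budget alongside the reporting slack $k\theta$, still guarantee $|\hat p_i - p_i| < \eps$ at \emph{every} time step rather than merely at round ends, all while $m$ is changing. Reconciling the dynamically rescaled threshold $\theta$ with the local summary guarantees, and verifying that the bound holds continuously rather than only at the instants of reporting, is the delicate part of the argument; the error and communication accounting above, by contrast, are routine once the scheme is fixed.
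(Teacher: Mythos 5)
This lemma is not proved in the paper at all: it is imported verbatim from Yi and Zhang \cite{YZ09} as a black-box tool (the text says only ``We summarize their main result below''), so there is no in-paper proof to compare against. That said, your proposal is essentially a correct reconstruction of the actual algorithm in \cite{YZ09}: the round structure keyed to doublings of $m$, the shared additive reporting threshold $\theta = \Theta(\eps M/k)$, and the charging argument that each report consumes a disjoint block of $\theta$ arrivals (hence $O(k/\eps)$ reports per round and $\tilde{O}(k/\eps)$ bits overall) are exactly the ingredients of their deterministic tracking scheme. On the one step you flag as the main obstacle, the accounting does close: a Misra--Gries summary with $O(1/\eps)$ counters at site $j$ underestimates each local count by at most $O(\eps m_j)$ at every time step (not just at round ends), and since $\sum_j m_j = m$ these local errors sum to $O(\eps m)$ globally; adding the $k\theta = O(\eps m)$ of withheld increments and the $O(\eps m)$ slack in the coordinator's estimate of $m$, and shrinking all constants, gives $\abs{\hat{p_i} - p_i} < \eps$ continuously. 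The only point worth stating explicitly is that a round boundary can leave up to $\theta_{\mathrm{old}} + \theta_{\mathrm{new}} < 2\theta_{\mathrm{new}}$ unreported increments per site per element, which is still absorbed by the constants.
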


\noindent{\bf\CM.}
We will also need  the  CountMin sketch introduced by Cormode and Muthukrishnan \cite{cormode05}  in the streaming model.  We summarize its property below.
\vspace{-1mm}
\begin{lemma}[\cite{cormode05}]
  \label{lem:CM}
  The \CM$(\eps,\delta)$  sketch uses $O(\frac{1}{\eps}\log m \log \frac{1}{\delta})$ bits of space in the streaming model, such that for any given element $i \in [n]$, it gives an estimation $\hat{m}_i$ of $i$'s frequency $m_i$ such that
    $\Pr[m_i \leq \hat{m}_i \leq m_i + \eps m_{-i}] \ge 1 - \delta$.
The processing time for each item is $O(\log \frac{1}{\delta})$.
\end{lemma}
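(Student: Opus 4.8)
The plan is to instantiate the standard CountMin construction and then verify the stated guarantee. I would maintain a $d \times w$ table of nonnegative integer counters $C[j][\ell]$ together with $d$ hash functions $h_1,\dots,h_d : [n] \to [w]$ drawn independently from a $2$-universal family, setting $w = \lceil 2/\eps \rceil$ and $d = \lceil \log(1/\delta) \rceil$. On each arriving item of value $i$, increment $C[j][h_j(i)]$ for every row $j \in [d]$; to answer a query for element $i$, return $\hat{m}_i = \min_{j \in [d]} C[j][h_j(i)]$.

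First I would establish the deterministic left inequality $m_i \le \hat{m}_i$. In each row $j$ the counter $C[j][h_j(i)]$ equals $m_i$ plus the total frequency of all other elements hashing to the same bucket; since every term is nonnegative, $C[j][h_j(i)] \ge m_i$ for all $j$, and hence $\hat{m}_i \ge m_i$ holds with probability $1$. Next I would bound the overestimation within a single row. Writing the collision mass as $Z_j = C[j][h_j(i)] - m_i = \sum_{i' \neq i} m_{i'} \cdot \mathbf{1}[h_j(i') = h_j(i)]$, the $2$-universality of the family gives $\Pr[h_j(i') = h_j(i)] \le 1/w$, so by linearity $\E[Z_j] \le m_{-i}/w \le (\eps/2)\, m_{-i}$. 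Markov's inequality then yields $\Pr[Z_j > \eps\, m_{-i}] \le \E[Z_j]/(\eps\, m_{-i}) \le 1/2$.

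The two pieces combine through independence across rows. The event $\hat{m}_i > m_i + \eps\, m_{-i}$ occurs only if \emph{every} row overshoots, i.e.\ $Z_j > \eps\, m_{-i}$ for all $j$; because the $d$ rows use independently chosen hash functions, this has probability at most $(1/2)^d \le \delta$ by the choice of $d$. Pairing this upper tail with the deterministic lower bound yields $\Pr[m_i \le \hat{m}_i \le m_i + \eps\, m_{-i}] \ge 1 - \delta$, as claimed. For the resource bounds, the table holds $w \cdot d = O(\eps^{-1} \log(1/\delta))$ counters of $O(\log m)$ bits each, and the hash seeds take $O(d \log n)$ bits, for a total of $\tilde{O}(1/\eps)$ bits; each update touches one counter per row, i.e.\ $O(d)$ operations, which is $O(1)$ when $\delta$ is treated as constant.

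Since this is a cited result of Cormode and Muthukrishnan, I do not expect a genuine obstacle. The only points that require care are that the within-row collision bound relies \emph{solely} on $2$-universality (so one must use Markov's inequality inside a row rather than a Chernoff bound, which would need independence we do not have), and that the amplification of the success probability is obtained purely through the mutual independence of the $d$ rows. Everything else is a routine verification of the deterministic one-sided bound and of the counter/time bookkeeping.
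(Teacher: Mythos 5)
The paper offers no proof of this lemma---it is imported verbatim from Cormode and Muthukrishnan \cite{cormode05}---and your reconstruction is the standard, correct argument for exactly the guarantee stated: the deterministic one-sided bound from nonnegativity, the per-row expectation bound $\E[Z_j]\le m_{-i}/w$ from $2$-universality followed by Markov, and amplification via the independence of the $d$ rows. Your parameter choices ($w=\lceil 2/\eps\rceil$, $d=\lceil\log(1/\delta)\rceil$) differ from the original's $e/\eps$ and $\ln(1/\delta)$ only in constants, and your caveat that the update time is really $O(d)=O(\log(1/\delta))$ rather than an unconditional $O(1)$ is a fair (minor) quibble with the lemma as stated.
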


\subsection{ Tracking $f_m$ Under A Restricted Class $\mathcal{A}'$}
We have briefly mentioned (before Definition \ref{def:trackable}) that if we consider all possible inputs, $f_m = x\log \frac{m}{x}$ cannot be tracked efficiently by directly using our AMS sampling framework because the corresponding $\lambda$ does not exist.
However, if we consider another input class $$\mathcal{A}'=\{ A \in [n]^{m'}: 0 < m' \leq m, \forall i\in[n], m_i \leq 0.7 m\},$$ 
(in other words, we consider streams with length no more than $m$ and the frequency of each element is bounded by $0.7m$),  then we can upper bound $\lambda_{f_m, \mathcal{A}'}$ by a constant.

The following two lemmas show that under input class $\mathcal{A}'$, $f_m$ can be tracked efficiently using the AMS framework in Section~\ref{sec:AMS}. 

\begin{lemma}
  \label{lem:trackable}
Let $f_m$ and the input class $\mathcal{A}'$ be defined above. We have $\lambda_{f_m, \mathcal{A}'} \le 10$ and $\inf_{A' \in \mathcal{A}'} \bar{f_m}(A') \ge 0.5$.
\end{lemma}
\begin{proof}
Let $r,\hat{r} \in \mathbb{Z}^+$, where $\hat{r}$ is a $(1+\eps)$-approximation to $r$. Let $X = X(r) = f_m(r) - f_m(r-1)$ and $\hat{X} = X(\hat{r})$.  Taking the derivative, $X'(r) = f_m'(r) - f_m'(r-1) = -\log \left(1+\frac{1}{r-1}\right) < 0$, and thus $\inf X = f_m(0.7m) - f_m(0.7m-1) \overset{m\gg 1}{\approx} \log 0.7^{-1} > 0.5$. 
  
When $r\ge 2$, we have
$$\abs{X(r) - X(\hat{r})} \le \eps r \log \left(1 + \frac{1}{(1-\eps)r - 1}\right)
                          \le 5 \eps;
$$
and when $r=1$, we have $\hat{r} = r$ hence $X = \hat{X}$. 
Therefore $\abs{X - \hat{X}} \le 5 \eps \le 10\cdot \eps \cdot X$ (as $\inf X > 0.5$). Consequently we can set $\lambda = 10$, and thus $\lambda_{f_m, \mathcal{A}'} = \inf \{ \lambda\}\leq 10$.

Next, given any $A'\in \mathcal{A}'$, we have $m_i < 0.7 m$ for all $i\in[n]$, 
 and thus $\bar{f_m}(A')=\frac{1}{|A'|}\sum_{i\in[n]}f_m(m_i) > \log 0.7^{-1}  > 0.5.$
\end{proof}

\begin{lemma}
\label{lem:entropy-1}
Let $f_m$ and $\mathcal{A}'$ be defined above. Algorithm \ref{algo:TrackingF} (with \CEN\ in Algorithm~\ref{algo:ReceiveItemAMS} and Algorithm \ref{algo:SampleUpdateAMS} replaced by \CE)  maintains a $(1+\eps, \delta)$-approximation to $\bar{f_m}(A)$ for any $A \in \mathcal{A}'$ at the coordinator, using 
$O\left( (k/\eps^2 + \sqrt{k}/\eps^3)\cdot \log\frac{1}{\delta}\log^4 m \right)$ bits of communication, $O(\eps^{-2}\cdot \log^3 m \log\frac{1}{\delta})$  bits space per site, and amortized $O(\eps^{-2}\cdot\log\frac{1}{\delta}\log^2 m)$ time per item.

\end{lemma}

\begin{proof}
This lemma is a direct result of Corollary \ref{cor:ams-inf-2}.
The main task to derive the stated is to bound $\sup_{A\in \mathcal{A}'}\pi(A)$. Recall that $X(R) = f_m(R) - f_m(R-1)$, as $X'(R) < 0$ we have $0.5 < X(0.7m) \leq X \leq X(1) = \log m$. To give an upper bound of $\kappa$, it suffices to use $a = 0, b = \log m$, and set $E = \inf_{A\in\mathcal{A'}}\bar{f_m}(A) \ge 0.5$, which gives an upper bound $$\sup_{A\in\mathcal{A'}}\pi(A)\leq \frac{3(1 + a/E)^2(a + b)}{(a + E)} = O(\log m),$$ or $\kappa(\frac{\eps}{2}, \frac{\delta}{2}, \mathcal{A}') = O(\eps^{-2} \log m \log \delta^{-1})$.

Thus we maintain $\Theta(\eps^{-2} \log m \log \delta^{-1})$ copies of estimators at the coordinator. The lemma follows by applying Corollary \ref{cor:ams-inf-2} with the values $\lambda_{f_m, \mathcal{A}'}$ and $\sup_{A\in\mathcal{A}'}\pi(A)$ chosen above (and note $O(\log \frac{\kappa}{\delta}) = O(\log m)$). 
\end{proof}

\subsection{Intuition on Tracking $f_m$ under $\mathcal{A} = [n]^m$}

To track $f_m$ under input class $\mathcal{A} = [n]^m$, a key observation made by Chakrabarti et al.\ \cite{CCM10} is that we can use the following expression to compute the entropy of $A$ when the stream $A\in \mathcal{A}$ has a frequent element $z$ (say, $p_z \ge 0.6$):
 \begin{eqnarray}
    H(A) &=& \frac{1}{m}\sum_{i=1}^n f_m(m_i) \nonumber \\
    &=& (1-p_z) \E[X'] + p_z\log ({1}/{p_z}) \label{eq:HA},
 \end{eqnarray}
where $X' = f_m(R') - f_m(R' - 1)$, and $(S', R') \sAMS A\backslash z$.  Note that $A\backslash z \in \mathcal{A}'$. Thus by Lemma \ref{lem:entropy-1} we can track $\E[X'] = \bar{f_m}(A\backslash z)$ efficiently.

We try to implement this idea in distributed monitoring. The remaining tasks are: (1) keep track of the pair $(S', R')$ (thus $X'$); and (2) keep track of $(1 - p_z)$ and $p_z$. Compared with the streaming model \cite{CCM10}, both tasks in distributed monitoring require some new ingredients in algorithms and analysis, which we present in Section~\ref{sec:algo} and Section~\ref{sec:analysis}, respectively.

\subsection{The Algorithms}
\label{sec:algo}

We first show how to maintain the pair $(S', R')$, $p_z$ and $1 - p_z$ approximately.

\paragraph{Maintain $(S', R')$.} As observed in \cite{CCM10}, directly sampling $(S', R')$ is not easy. The idea in \cite{CCM10} is to maintain $(S_0, R_0) \sAMS A$ and $(S_1, R_1) \sAMS A \backslash S_0$. We also keep track of the item $z$ with $p_z \ge 0.6$, if exists. Now we can construct $(S', R')$ as follows: if $z \neq S_0$, then $(S',R') \gets (S_0, R_0)$; otherwise $(S',R') \gets (S_1, R_1)$. The proof of the fact that $S'$ is a random sample from $A \backslash z$ can be found in \cite{CCM10}, Lemma 2.4. 
Algorithm~\ref{algo:ReceiveItem} and \ref{algo:HandleMessageC} show how to maintain $S_0, S_1$.  Algorithm~\ref{algo:countR} (that is, \EstR$(\eps, \delta_1)$, where $\delta_1$ will be set to $\delta/4\kappa$ in Algorithm~\ref{algo:entropyEst}) shows how to maintain $\left(1+\frac{\eps}{\lambda_{f_m, \mathcal{A}'}}, \delta_1\right)$-approximations to $R_0$ and $R_1$,
 and consequently a $\left(1+\frac{\eps}{\lambda_{f_m, \mathcal{A}'}}, \delta_1\right)$-approximation to $R'$, which guarantees that $|X' - \hat{X}'| \leq \eps \cdot X'$ holds with probability at least $(1 - \delta_1)$. 
\medskip

\begin{algorithm}[H]
initialize $S_0 = S_1 = \perp, r(S_0) = r(S_1) = +\infty$\;
\ForEach{$e$ received}{
sample $r(e) \in_R (0,1)$\;
  \If {$e = S_0$}{
    \lIf {$r(e) < r(S_0)$}{
      send the coordinator ``update $(S_0, r(S_0))$ with $(e, r(e))$''
    }
  }\ElseIf {$e \neq S_0$}{
    \If {$r(e) < r(S_0)$} {
      send the coordinator ``update $(S_1, r(S_1))$ with $(S_0, r(S_0))$''\;
      send the coordinator ``update $(S_0, r(S_0))$ with $(e, r(e))$''\;
    }
    \lElseIf {$r(e) < r(S_1)$}{
      send the coordinator ``update $(S_1, r(S_1))$ with $(e, r(e))$''
    }
  }
}
\caption{Receive an item at a site (for the Shannon entropy)}
\label{algo:ReceiveItem}
\end{algorithm}

\begin{algorithm}[H]
\ForEach {message \textbf{msg} received}{
 execute \textbf{msg}: update $(S_0, r(S_0))$ and/or $(S_1, r(S_1))$ based on \textbf{msg}\;
  broadcast \textbf{msg} to all sites and request each site to execute the \textbf{msg}\;
}
\caption{Update samples at the coordinator (for the Shannon entropy)}
\label{algo:HandleMessageC}
\end{algorithm}

\begin{algorithm}[H]
initialize $S_0 = S_1 = \perp$, and $r(S_0) = r(S_1) = +\infty$\; 
set $\eps_1 \leftarrow \frac{\eps}{10}$\;
  \lIf {$S_0$ is updated by the same element}{
    restart $\text{\CE}(S_0, \eps_1, \delta_1)$
  }
  \lElseIf {$S_0$ is updated by a different element}{
    restart $\text{\CE}(S_0, \eps_1, \delta_1)$
  }
  \ElseIf {$S_1$ is updated}{
    \If {$S_1$ is updated by $S_0$}{
      replace the whole data structure of $\text{\CE}(S_1, \eps_1, \delta_1)$ with $\text{\CE}(S_0, \eps_1, \delta_1)$\;
    }
    \lElse{
          restart $\text{\CE}(S_1, \eps_1, \delta_1)$
        }
  }

\caption{{\bf $\text{\EstR}(\eps, \delta_1)$}: Maintain $R_0$ and $R_1$ at the coordinator}
\label{algo:countR}
\end{algorithm}

\paragraph{Maintain $p_z$ and $1 - p_z$.} It is easy to use \CA\ to maintain $p_z$ up to an additive approximation error $\eps$, which is also a $(1+O(\eps))$-approximation of $p_z$ if $p_z \ge 0.6$. However, to maintain a $(1+\eps)$-relative approximation error of $(1 - p_z)$ is non-trivial when $(1 - p_z)$ is very close to $0$. We make use of \CA, \CEN\ and \CM\ to construct an algorithm \EstP$(\eps, \delta)$, which maintains a $(1+\eps, \delta)$-approximation of $(1 - p_z)$ at the coordinator when $p_z > 0.6$. We describe \EstP\ in Algorithm~\ref{algo:trackprob}.
\medskip

\begin{algorithm}[H]
initialize $z \leftarrow \perp$; ${ct} \leftarrow 0$; $\forall i \in [k], {ct}_i \leftarrow 0$\;
\tcc*[h]{run the following processes in parallel:}\\
run $\text{\CA}(0.01)$ \;
run $\gamma \gets \text{\CEN}(-z,\eps/4)$ \;
run $\hat{m} \gets \text{\CEN}([n], \eps/4)$\;

the coordinator maintains a counter ${ct}$ that counts the number of items received by all sites up to the last update of $z$\;

each site maintains a local \CM$(\eps/4, \delta)$ sketch\;

each site $S_i$ maintains a counter ${ct}_i$ that counts the number of items received at $S_i$\;

\tcc*[h]{monitored by $\text{\CA}(0.01)$}\\
\If{\CA\ identifies a new frequent element $e$ with $\hat{p}_e \ge 0.59$}{
$z \gets e$. Broadcast $z$ to all sites\;
 restart $\gamma \gets \text{\CEN}(-z, \eps/4)$\;
 each site $S_i$ sends its local \CM\ sketch and local counter $ct_i$ to the coordinator\;
the coordinator merges $k$ local \CM\ sketches to a global \CM, and sets ${ct} = \sum_{i \in [k]}{ct}_i$\;
}
\Return $z$, $\hat{p}_{-z} \gets \frac{ct - \text{\CM}[z] + \gamma}{\hat{m}}$; $ \hat{p}_z \gets 1 - \hat{p}_{-z}$.

\caption{{\bf \EstP$(\eps, \delta)$}: Approximate the empirical probability of a frequent element}
\label{algo:trackprob}
\end{algorithm}

\vspace{2mm}

\paragraph{Putting Things Together.} Let $(S_0, \hat{R}_0)$ and $(S_1, \hat{R}_1)$ be samples and their associated counts maintained by Algorithm~\ref{algo:ReceiveItem}, \ref{algo:HandleMessageC}, and \ref{algo:countR}.  
The final algorithm for tracking $H(A)$ is depicted in Algorithm~\ref{algo:entropyEst}.
\medskip

\begin{algorithm}[H]
$\kappa \gets 480 \eps^{-2}\ln (4\delta^{-1})(2+\log m)$\;
\BlankLine
\tcc*[h]{maintain $z$, $\hat{p}_z$ and $\hat{p}_{-z}$}\\
run $\text{\EstP}(\eps/4, \delta/2)$\;
\tcc*[h]{get $\kappa$ independent copies of $(S_0, \hat{R}_0, S_1, \hat{R}_1)$}\\
run $\kappa$ copies of Algorithm~\ref{algo:ReceiveItem}, \ref{algo:HandleMessageC}, and \EstR$(\eps/6, {\delta}/{(4\kappa)})$ in parallel\;

\If {$\hat{p}_z > 0.65$}{
  \tcc*[h]{For each copy of $(S_0, \hat{R}_0, S_1, \hat{R}_1)$, construct $\hat{R}'$}\\
  \lIf {$S_0 = z$}{
    $\hat{R}' \gets \hat{R}_1$
   }
   \lElse{
    $\hat{R}' \gets \hat{R}_0$
   }
   \tcc*[h]{$\text{\Est}(f_m, \hat{R}', \kappa)$ gives the average of $\kappa$ copies of $f_m(\hat{R}') - f_m(\hat{R}'-1)$}\\
   \Return $(1 - \hat{p}_z) \text{\Est}(f_m, \hat{R}', \kappa) + \hat{p}_z \log (1/\hat{p}_z)$\;
}
\Else {
  \Return ~\Est$(f_m, \hat{R}_0, \kappa)$
}

\caption{{\bf \EstH $(\eps, \delta)$}: Approximate the Shannon entropy}
\label{algo:entropyEst}
\end{algorithm}

\subsection{The Analysis}
\label{sec:analysis}
We prove the following result in this section.
\begin{theorem}
  \label{thm:ent-inf-cost}
  \EstH$(\eps,\delta)$ maintains at the coordinator a $(1+\eps, \delta)$-approximation to the Shannon entropy, using
$O\left( (k/\eps^2 + \sqrt{k}/\eps^3)\cdot \log\frac{1}{\delta}\log^5 m \right)$
bits of communication, $O(\eps^{-2} \cdot  \log \frac{1}{\delta}\log^3 m)$ bits space per site and amortized $O(\eps^{-2}\cdot \log \frac{1}{\delta} \log^2 m)$ time per item.
\end{theorem}

We first show the correctness of \EstH, and then analyze its costs. 
\vspace{3mm}

\noindent{\bf Correctness.}  We establish the correctness by the following two lemmas.  The first lemma shows that if there is a frequent element $z$ with empirical probability $p_z\geq 0.6$, then Algorithm~\ref{algo:trackprob} properly maintains $1 - p_z$.   The second lemma shows the correctness of Algorithm~\ref{algo:entropyEst} for any input $A \in [n]^m$.

\begin{lemma}
\label{lem:estP}
$\hat{p}_{-z}$ (see Line $13$ of Algorithm \ref{algo:trackprob}) is a $(1+\eps, \delta)$-approximation to $(1-p_z)$.
\end{lemma}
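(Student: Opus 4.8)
The plan is to show that the numerator $ct - \text{\CM}[I] + \gamma$ in Line~13 is a $(1+\eps/4)$-approximation to $m_{-I}$ (the total frequency of all elements other than $I$) and that the denominator $\hat m$ is a $(1+\eps/4)$-approximation to $m$, so that their ratio approximates $m_{-I}/m = 1 - p_I$ to within a factor $\frac{1+\eps/4}{1-\eps/4}$, which is at most $1+\eps$ once $\eps \le 1/20$. The conceptual starting point is to split the stream at the time $t_0$ of the last update of $I$. Let $m^{(0)} = ct$ be the (exactly maintained) number of items received up to $t_0$, let $m^{(0)}_I$ and $m^{(0)}_{-I}$ be the frequencies of $I$ and of the non-$I$ elements among those first $m^{(0)}$ items, and let $m^{(1)}_{-I}$ be the number of non-$I$ items arriving after $t_0$. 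Since $I$ is not changed after $t_0$, we have $m_{-I} = m^{(0)}_{-I} + m^{(1)}_{-I}$, and the algorithm estimates the two summands separately.

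First I would handle the post-$t_0$ part: $\gamma = \text{\CEN}(-I, \eps/4)$ is restarted at $t_0$ and therefore deterministically tracks $m^{(1)}_{-I}$ to within a factor $(1 \pm \eps/4)$. Next I would handle the pre-$t_0$ part, which is the crux. We cannot have tracked $m^{(0)}_{-I}$ directly, but we do have the global $\text{\CM}(\eps/4, \delta)$ sketch assembled from the per-site snapshots taken at $t_0$, which covers exactly the first $m^{(0)}$ items. Applying Lemma~\ref{lem:CM} to element $I$ gives, with probability at least $1-\delta$, that $m^{(0)}_I \le \text{\CM}[I] \le m^{(0)}_I + \frac{\eps}{4} m^{(0)}_{-I}$; subtracting from $ct = m^{(0)} = m^{(0)}_I + m^{(0)}_{-I}$ yields $(1-\eps/4)\,m^{(0)}_{-I} \le ct - \text{\CM}[I] \le m^{(0)}_{-I}$, i.e.\ a one-sided $(1+\eps/4)$-approximation of $m^{(0)}_{-I}$.

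Adding the two estimates, the one-sided error on $m^{(0)}_{-I}$ and the two-sided error on $m^{(1)}_{-I}$ combine to give $(1-\eps/4)\,m_{-I} \le ct - \text{\CM}[I] + \gamma \le (1+\eps/4)\,m_{-I}$, so the numerator is a $(1+\eps/4)$-approximation of $m_{-I}$; here it is essential that the absolute \CM\ error is scaled by $m^{(0)}_{-I} \le m_{-I}$, which is exactly why a sketch whose additive error is proportional to the non-heavy mass suffices to obtain a \emph{relative} guarantee on the small quantity $1-p_I$. Dividing by $\hat m = \text{\CEN}([n],\eps/4)$, a $(1+\eps/4)$-approximation of $m$, and using $\eps \le 1/20$ to absorb $\frac{1+\eps/4}{1-\eps/4}$ into $1+\eps$ (and symmetrically for the lower bound), finishes the bound $|\hat p_{-I} - (1-p_I)| \le \eps(1-p_I)$. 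Every ingredient except the \CM\ sketch is deterministic, so the only source of failure is the \CM\ event, of probability at most $\delta$, giving the claimed $(1+\eps,\delta)$-approximation. The main obstacle, and the reason the algorithm mixes three primitives, is recovering the frequency of $I$ accumulated \emph{before} $I$ was recognized as the heavy element: \CEN\ only counts from $t_0$ onward, so the pre-$t_0$ mass of $I$ must be read off a frozen \CM\ snapshot, and I would take care to verify that the \CM\ error term is genuinely proportional to $m^{(0)}_{-I}$ rather than to $m^{(0)}$, since only then is the relative accuracy of $1-p_I$ preserved.
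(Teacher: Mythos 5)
Your proposal is correct and follows essentially the same route as the paper's proof: split the stream at the last update time of $I$, use the frozen \CM\ snapshot (whose additive error is proportional to $m^0_{-I}$) together with $ct$ to recover the pre-update non-$I$ mass, add the \CEN-tracked post-update non-$I$ count, and divide by the tracked $\hat m$. The only difference is that you spell out the error arithmetic that the paper leaves implicit after its remark that $ct = m^0_I + m^0_{-I}$.
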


\begin{proof}
Let $z\ (p_z \ge 0.6)$ be the candidate frequent element if exists. Let $t(z)$ be the time step of the most recent update of $z$.
At any time step, let $A^0$ be the substream consisting of all items received on or before $t(z)$, $ct = \abs{A^0}$, and let $A^1$ be the rest of the joint stream $A$. Let $m_z^0$ and $m_{-z}^0$ be the frequency of element $z$ in $A^0$ and the sum of frequencies of elements other than $z$ in $A^0$, respectively. Similarly, let $m_z^1$ and $m_{-z}^1$ be defined for $A^1$. 

Algorithm \ref{algo:trackprob} computes $\hat{m}^1_{-z}$ as an approximation to $m^1_{-z}$  by $\text{\CEN}(-z, \eps/4)$, and $\hat{m}$ as an approximation to $m$ by $\text{\CEN}([n], \eps/4)$, both at the coordinator. The coordinator can also extract from the global \CM\ sketch an $\hat{m}_z^0$, which  approximates $m_z^0$ up to an additive approximation error $\frac{\eps}{4} m_{-z}^0$ with probability $(1 - \delta)$. At Line $13$ of Algorithm \ref{algo:trackprob}, at any time step, the coordinator can compute 
\begin{equation*}
1 - \hat{p}_z = \hat{p}_{-z} = \frac{ct - \hat{m}_z^0  + \hat{m}^1_{-z}}{\hat{m}},
\end{equation*}
where $\hat{m}_z^0$ is an $(\frac{\eps}{4} m_{-z}^0, \delta)$-approximation of $m_z^0$, and $\hat{m}^1_{-z}$ and $\hat{m}$ are $(1+\eps/4)$-approximation of $m^1_{-z}$ and $m$, respectively. 

The lemma follows by further noting that $ct = m^0_z + m^0_{-z}$. 
\end{proof}

\begin{lemma}
\label{lem:entropy-correctness}
  \EstH$(\eps, \delta)$ (Algorithm~\ref{algo:entropyEst}) correctly maintains at the coordinator a $(1+\eps, \delta)$-approximation to the empirical entropy $H(A)$.
\end{lemma}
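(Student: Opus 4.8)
The plan is to prove Lemma~\ref{lem:entropy-correctness} by a case analysis that mirrors the two branches of Algorithm~\ref{algo:entropyEst}, governed by whether the tracked candidate $I$ is genuinely frequent. First I would fix the threshold interpretation: the algorithm branches on $\hat{p}_I > 0.65$, while the structural decomposition~(\ref{eq:HA}) requires $p_z \ge 0.6$. Since \CA$(0.01)$ guarantees $|\hat{p}_I - p_I| < 0.01$ (Lemma~\ref{lem:CA}), I would first record the two implications we actually use: if $\hat{p}_I > 0.65$ then $p_I > 0.64 > 0.6$, so $I$ is a legitimate frequent element $z$ and the decomposition applies; conversely, if $\hat{p}_I \le 0.65$ then $p_i < 0.66 < 0.7$ for every element $i$, so the full stream $A$ itself lies in the class $\mathcal{A}'$. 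This dichotomy is what lets each branch land in a regime where an already-proven tool is valid.

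Next I would handle the easy branch ($\hat{p}_I \le 0.65$). Here $A \in \mathcal{A}'$, so I can invoke Lemma~\ref{lem:entropy-1} directly: the estimator $\Est(f_m, \hat{R}_0, \kappa)$ built from $\kappa$ copies of $(S_0, \hat{R}_0) \sAMS A$, with each $\hat{R}_0$ a $(1+\eps/6)$-approximation tracked by \EstR$(\eps/6, \delta/4\kappa)$, is a $(1+\eps,\delta)$-approximation to $\bar{f}_m(A) = H(A)$. I would note that the parameter choices in Algorithm~\ref{algo:entropyEst} — namely $\kappa = 480\eps^{-2}\ln(4\delta^{-1})(2+\log m)$, $\eps_1 = \eps/10$, and error parameter $\eps/6$ — are exactly what Lemma~\ref{lem:entropy-1} and the relation $\eps_{f,\mathcal{A}'} = \eps/(3\lambda_{f_m,\mathcal{A}'})$ (with $\lambda_{f_m,\mathcal{A}'}\le 10$ from Lemma~\ref{lem:trackable}) demand, so no new estimate is needed.

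For the hard branch ($\hat{p}_I > 0.65$), I would start from the exact identity~(\ref{eq:HA}), $H(A) = (1-p_z)\E[X'] + p_z\log(1/p_z)$ with $(S',R')\sAMS A\backslash z$, and show the returned quantity $(1-\hat{p}_I)\Est(f_m,\hat{R}',\kappa) + \hat{p}_I\log(1/\hat{p}_I)$ approximates it termwise. The key sub-facts are: (i) $A\backslash z \in \mathcal{A}'$ because removing the frequent element leaves every remaining frequency below $0.7$ of the reduced length, so Lemma~\ref{lem:entropy-1} applies to $\E[X'] = \bar{f}_m(A\backslash z)$ and the constructed $(S',\hat{R}')$ yields a $(1+\eps,\delta)$-approximation of it; (ii) by the cited fact from \cite{CCM10} (Lemma~2.4 there), $S'$ is a uniform sample of $A\backslash z$, which justifies the case split $\hat{R}' \gets \hat{R}_1$ when $S_0 = I$ and $\hat{R}' \gets \hat{R}_0$ otherwise; and (iii) by Lemma~\ref{lem:estP}, $1-\hat{p}_I = \hat{p}_{-I}$ is a $(1+\eps,\delta)$-approximation of $1-p_I$, while \CA\ gives the multiplicative control on $p_z\log(1/p_z)$ since $p_z$ is bounded away from both $0$ and $1$. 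I would then combine these multiplicative errors, absorbing the rescaling of $\eps$ (the algorithm runs the subroutines at $\eps/4$, $\eps/6$, etc.) and the union bound over failure probabilities into the stated $(1+\eps,\delta)$ guarantee.

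The main obstacle I anticipate is the error composition in the hard branch, specifically arguing that a relative error on $(1-p_z)$ rather than an additive one is both necessary and achieved. When $p_z$ is close to $1$, the factor $(1-p_z)$ can be tiny, so an additive approximation of $p_z$ (all that \CA\ alone provides) would blow up the relative error of the first summand; this is exactly why \EstP\ was built from the more delicate combination of \CA, \CEN, and \CM, and why Lemma~\ref{lem:estP} is phrased multiplicatively. I would therefore take care that the product $(1-\hat{p}_I)\Est(f_m,\hat{R}',\kappa)$ carries only a $(1+O(\eps))$ relative error uniformly in how close $p_z$ is to $1$, rather than an error that degrades as $p_z \to 1$, and confirm that the boundary case where $0.65 < \hat{p}_I$ but $p_z$ is only slightly above $0.6$ still keeps $A\backslash z$ safely inside $\mathcal{A}'$ with the constant $0.7$ providing the needed slack.
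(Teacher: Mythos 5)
Your overall architecture matches the paper's proof: the same two-case split on $\hat{p}_I$ versus $0.65$, the same reduction of the easy branch to Lemma~\ref{lem:inf-correctness} (equivalently Lemma~\ref{lem:entropy-1}) via $A\in\mathcal{A}'$, and the same termwise treatment of the decomposition~(\ref{eq:HA}) in the hard branch. However, there is one genuine error in the hard branch: you claim that \CA\ alone gives multiplicative control on the second summand $p_z\log(1/p_z)$ ``since $p_z$ is bounded away from both $0$ and $1$.'' The premise is false --- $p_z$ is only bounded away from $0$ (indeed $p_z>0.6$), and nothing prevents $p_z\to 1$, in which case $p_z\log(1/p_z)\to 0$. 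An additive error of $0.01$ (or even an additive $O(\eps)$) on $p_z$ coming from \CA\ therefore does not yield a $(1+O(\eps))$ relative approximation of $p_z\log(1/p_z)$; this term suffers from exactly the same degeneracy you correctly diagnosed for the first summand $(1-p_z)\E[X']$, but then overlooked here.

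The paper closes this hole by feeding the \EstP\ guarantee into the second term as well: since $\hat{p}_{-I}=1-\hat{p}_I$ is a $(1+\eps/4)$-relative approximation of $1-p_I$, one has $|\hat{p}_I-p_I|\le \tfrac{\eps}{4}(1-p_I)$, and then a mean-value estimate using $\max_{p\in[1/2,1]}\bigl|\tfrac{d}{dp}(p\log(1/p))\bigr|=\log e$ together with $\log(1/p_I)=\Theta(1-p_I)$ for $p_I\ge 0.6$ gives $|\hat{p}_I\log(1/\hat{p}_I)-p_I\log(1/p_I)|\le \eps\, p_I\log(1/p_I)$ uniformly as $p_I\to 1$. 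With that substitution your argument goes through; everything else (the threshold bookkeeping, the application of Lemma~\ref{lem:entropy-1} to $A\backslash z\in\mathcal{A}'$, the appeal to Lemma~2.4 of \cite{CCM10} for the validity of $(S',R')$, and the union bound over the sub-failure probabilities) is essentially identical to the paper's proof.
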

\begin{proof}
Similar to \cite{CCM10} Theorem 2.5, we divide our proof into two cases.
\vspace{2mm}

\noindent{\bf Case 1:} there does not exist a $z$ with $\hat{p}_z \le 0.65$. We reach line 9 of Algorithm~\ref{algo:entropyEst}.
The assumption that $\eps < 1/20$ implies $p_z < 0.7$, and thus the input stream $A\in \mathcal{A}'$. It is easy to verify that our choices of parameters satisfy the premise of Lemma \ref{lem:inf-correctness}, thus the correctness.
\vspace{2mm}

\noindent{\bf Case 2:}  there is a $z$ with $\hat{p}_z > 0.65$. We reach line 7 of Algorithm~\ref{algo:entropyEst}. In this case we use Equation (\ref{eq:HA}).
The assumption that $\eps < 1/20$ implies $p_z > 0.6$, thus $A\backslash z \in \mathcal{A}'$. At line 3, we run \EstR$({\eps}/{6}, \frac{\delta}{4\kappa})$ so that with probability $1 -\frac{\delta}{4}$, the $\kappa$ copies $X'$s satisfy $|X' - \hat{X'}| \leq \frac{\eps}{6} X'$ simultaneously by a union bound.  One can verify based on (\ref{eq:kappa}) that our choice for $\kappa$ is large enough to ensure that \Est$(f_m, R', \kappa)$  is a $(1+\eps/4, \delta/4)$-approximation to $\E[X']$. Applying the same argument as in Lemma \ref{lem:inf-correctness}, we have $\Est(f_m, \hat{R}', \kappa)$ as a $(1 + \eps/2, \delta/2)$-approximation to  $\E[X']$.

At line 2, \EstP$(\eps/4, \delta/2)$ gives $(1-\hat{p}_z)$ as a $(1+\eps/4, \delta/2)$-approximation to $(1-p_z)$ (by Lemma \ref{lem:estP}). Further noting that when $(1 - \hat{p}_z)$ is a $(1+\eps/4)$-approximation to $(1 - p_z)$, we have
\begin{eqnarray*}
  &&\frac{|\hat{p}_z\log({1}/{\hat{p}_z}) - p_z \log({1}/{p_z})|}{p_z \log({1}/{p_z})} \\
   &\leq& \frac{|\hat{p}_z - p_z|}{p_z \log (1/p_z)} \max_{p\in [\frac{1}{2},1]}\abs{\frac{d(p\log{1}/{p})}{dp}} \\
   &\leq& \frac{\frac{\eps}{4} (1 - p_z)}{p_z\log (1/p_z)} \log e  \\
   & \leq& \eps.
\end{eqnarray*}
Thus $(1-\hat{p}_z)\text{\Est}(f_m, \hat{R}', \kappa) + \hat{p}_z\log \frac{1}{\hat{p}_z}$ is a $(1+\eps, \delta)$-approximation to $H(A) = (1-p_z)\E[X'] + p_z\log \frac{1}{p_z}$.
\end{proof}


\paragraph{Communication Cost.}  We shall consider the extra cost introduced by the following adaptations to the general AMS Sampling  framework described in Section \ref{sec:AMS}: (1) we need to maintain $S_0$ and $S_1$ rather than to simply maintain $S$; and (2) we have a new subroutine \EstP. It turns out that (1) will only introduce an additional multiplicative factor of $\log m$ to the communication, and (2) is not the dominating cost.


\smallskip

We first bound the total number of times $\text{\CE}(S_0)$ and $ \text{\CE}(S_1)$ being restarted.
\begin{lemma}
  \label{lem:sample2}
Let $C_0$ and  $C_1$ be the frequency of $\text{\CE}(S_0)$ and
$\text{\CE}(S_1)$ being restarted in Algorithm~\ref{algo:countR}, respectively. Then
 $(C_0+C_1)$ is bounded by $O(\log^2 m)$ with probability at least $1 - m^{-1/6}$. 
\end{lemma}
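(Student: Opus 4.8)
The plan is to bound the two restart counts separately and then combine them. For $C_0$, note that Algorithm~\ref{algo:countR} restarts $\text{\CE}(S_0)$ exactly when $S_0$ changes, and by Algorithm~\ref{algo:ReceiveItem} this happens precisely when an incoming item receives a rank smaller than the current $r(S_0)$, i.e.\ at each left-to-right minimum of the rank sequence $r(a_1),\ldots,r(a_m)$. Since the ranks are i.i.d.\ uniform on $(0,1)$, I would apply Lemma~\ref{lem:bd} with the $U_i$'s taken to be these ranks, yielding $C_0 \le 2\log m$ with probability at least $1 - m^{-1/3}$.

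For $C_1$ I would partition the stream into \emph{value-epochs}: maximal intervals during which the value of $S_0$ stays fixed. A value-epoch ends only when $S_0$ is replaced by an item of a \emph{different} value, which is one of the $S_0$-updates already counted by $C_0$; hence the number of value-epochs is at most $C_0+1$. The structural fact driving the argument is that within a value-epoch whose $S_0$-value is $v$, the sample $S_1$ is untouched by same-value arrivals and always equals the smallest-rank item of value $\ne v$ seen so far (the item copied into $S_1$ at the epoch's start was the previous global minimum, hence smaller than every earlier value-$\ne v$ rank). Consequently each restart of $\text{\CE}(S_1)$ inside the epoch coincides with a fresh left-to-right minimum among the ranks of the value-$\ne v$ items; observe also that the epoch-boundary step $S_1 \gets S_0$ is a copy and not a restart, so it does not contribute to $C_1$. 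Applying Lemma~\ref{lem:bd} to the relevant value-$\ne v$ subsequence (a sequence of at most $m$ i.i.d.\ uniform ranks) bounds the number of restarts in that epoch by $2\log m$, except with probability $m^{-1/3}$.

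I would then assemble the pieces by a union bound. On the event $C_0 \le 2\log m$ there are at most $2\log m + 1$ value-epochs, so the probability that some epoch incurs more than $2\log m$ restarts is at most $(2\log m+1)\,m^{-1/3}$; adding the $m^{-1/3}$ failure probability from the $C_0$ bound gives total failure at most $(2\log m + 2)\,m^{-1/3}$, which is below $m^{-1/6}$ for all sufficiently large $m$ since $\log m = o(m^{1/6})$. On the complementary event one gets $C_0 + C_1 \le 2\log m + (2\log m+1)(2\log m) = O(\log^2 m)$, as claimed.

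The step I expect to be the main obstacle is making the per-epoch application of Lemma~\ref{lem:bd} fully rigorous, since the epoch boundaries are themselves functions of the random ranks, so the ``i.i.d.\ uniform subsequence'' I wish to feed into the lemma is selected adaptively. The cleanest way I see to avoid the conditioning artifacts is exactly the structural observation above: because $S_1$ tracks the running minimum over the \emph{entire} value-$\ne v$ subsequence, every in-epoch restart is a genuine left-to-right minimum of that full, non-adaptively defined subsequence, whose ranks are unconditionally i.i.d.\ uniform; the in-epoch restarts are then a subset of those minima and Lemma~\ref{lem:bd} applies to the full subsequence directly. The remaining care is to control the union over the random set of $S_0$-values that actually arise during the run, which I would handle by first conditioning on the $O(\log m)$ bound for $C_0$ so that only $O(\log m)$ such values can occur.
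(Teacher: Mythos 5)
Your overall strategy is the same as the paper's: bound $C_0$ by $O(\log m)$ via Lemma~\ref{lem:bd}, split the stream into $O(\log m)$ epochs governed by $S_0$, show that the $\text{\CE}(S_1)$ restarts inside one epoch are left-to-right minima of a subsequence of i.i.d.\ uniform ranks so that Lemma~\ref{lem:bd} gives $O(\log m)$ per epoch, and finish with a union bound. The one substantive difference is the epoch decomposition: the paper cuts at every $S_0$-update and applies Lemma~\ref{lem:bd} to the (adaptively selected) items that get compared against $r(S_1)$, whose ranks are uniform on $(b_i,1)$; you cut only when the \emph{value} of $S_0$ changes and apply Lemma~\ref{lem:bd} to the entire value-$\neq v$ subsequence, which is deterministic given the input and carries unconditioned i.i.d.\ ranks. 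Your structural claims (that $S_1$ tracks the running minimum of the value-$\neq v$ subsequence, and that the boundary step $S_1\gets S_0$ is a copy rather than a restart) are correct and your version of the per-epoch lemma application is, in that one respect, cleaner than the paper's.

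The soft spot is exactly the one you flag, and your proposed fix does not quite close it. Conditioning on $C_0 \le 2\log m$ tells you that the random set $V$ of $S_0$-values has size $O(\log m)$, but not \emph{which} values it contains; a union bound over all values that could ever enter $V$ costs a factor of $\min(n,m)$ rather than $O(\log m)$, and the events $\{v\in V\}$ and $\{L_v > 2\log m\}$ are correlated (both depend on the interleaving of ranks), so you cannot simply multiply $\E[|V|]$ by $m^{-1/3}$ either. The repair is to index value-epochs by their ordinal number $j=1,2,\dots$ and, for each $j \le 2\log m + 1$, condition on the history up to the start of epoch $j$ (a stopping time): this fixes $v_j$ and the current $r(S_1)$, the ranks of the not-yet-arrived value-$\neq v_j$ items remain i.i.d.\ uniform, and Lemma~\ref{lem:bd} bounds the restarts in epoch $j$ by $2\log m$ except with probability $m^{-1/3}$; averaging over the history and union-bounding over $j$ gives the claim. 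This is in effect what the paper's ``for a fixed $C_0$'' step is doing, and with that adjustment your argument goes through.
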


\begin{proof}
by Lemma \ref{lem:bd}, $C_0$ is bounded by $O(\log m)$ with probability at least $1 - m^{-1/3}$.
Now let us focus on $C_1$.
Suppose $n_1 < n_2 < \ldots < n_{C_0}$ are the global indices of items that update $S_0$. Let $b_i = r(a_{n_i})$, we have
$b_1 > b_2 > \ldots > b_{C_0}$. Let $A_i$ be the substream of $(a_{n_i}, a_{n_i+1}, \ldots a_{n_{i+1}-1})$ obtained by collecting all items that will be compared with $r(S_1)$; thus $\abs{A_i} \leq m$ and each item in $A_i$ is associated with a rank uniformly sampled from $(b_i, 1)$. For a fixed $C_0$, by  Lemma \ref{lem:bd} and a union bound we have that $C_1 < O(C_0\log m)$ with probability at least $1-\frac{C_0}{m^{1/3}}$. Also recall that $C_0 < 2\log m$ with probability $1 - m^{-1/3}$. Thus $C_1 = O(\log^2 m)$ with probability at least $1 -  \frac{2\log m}{m^{1/3}}$.
\end{proof}


We now bound the total communication cost.

\begin{lemma}
  \label{lem:proTrack} \EstP$(\eps, \delta)$ uses $O(\frac{k}{\eps}\log\frac{1}{\delta}\log^3 m)$  bits of communication.
\end{lemma}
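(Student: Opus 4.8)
The plan is to split the communication used by \EstP\ into two buckets: the cost of the subroutines that run continuously for the whole stream, and the cost incurred at each moment the candidate frequent element $I$ is updated. The key quantity controlling everything is the number of times $I$ changes, and the heart of the argument will be showing this is only $O(\log m)$.

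First I would account for the continuously-running pieces. The call \CA$(0.01)$ runs with a constant accuracy parameter, so by Lemma~\ref{lem:CA} it contributes only $\tilde{O}(k)$ bits. The counter $\hat{m} = \text{\CEN}([n],\eps/4)$ tracks the total length $m$, is never restarted, and costs $\tilde{O}(k/\eps)$ bits. The counter $\gamma = \text{\CEN}(-I,\eps/4)$ also costs $\tilde{O}(k/\eps)$ bits per run, but it is restarted each time $I$ changes, so its total cost is (number of updates of $I$)$\cdot\tilde{O}(k/\eps)$. The local \CM\ sketches and local counters $ct_i$ are maintained without any communication until $I$ is updated. I would then bound the cost charged to a single update of $I$: broadcasting the new $I$ costs $\tilde{O}(k)$ bits, and having each of the $k$ sites ship its local \CM$(\eps/4,\delta)$ sketch and counter $ct_i$ costs, by Lemma~\ref{lem:CM}, $k\cdot\tilde{O}(1/\eps)=\tilde{O}(k/\eps)$ bits. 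So each update is charged $\tilde{O}(k/\eps)$, matching the per-run cost of restarting $\gamma$.

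The main obstacle will be proving that $I$ is updated only $O(\log m)$ times; I expect to do this by showing each change forces the stream length to grow by a constant factor. Suppose $I$ is set to $a$ when the stream has length $M_a$ and is later changed to some $b\neq a$ when the length is $M_b$. Setting $I\gets z$ requires \CA\ to report $\hat{p}_z\ge 0.59$, which by the additive-$0.01$ guarantee of Lemma~\ref{lem:CA} forces the true empirical probability to satisfy $p_z>0.58$ (in particular at most one element can be ``frequent'' at any instant, so each update is a genuine element change). Hence $m_b\ge 0.58\,M_b$ at the moment of the change, whereas at the earlier moment $m_b\le m_{-a}=M_a-m_a<0.42\,M_a$. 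Since at most $M_b-M_a$ items arrive in between,
\[
0.58\,M_b \;\le\; m_b \;\le\; 0.42\,M_a + (M_b-M_a) \;=\; M_b - 0.58\,M_a,
\]
which rearranges to $M_b\ge \tfrac{29}{21}M_a$. Thus consecutive updates of $I$ multiply the stream length by at least $29/21>1$, so there are at most $\log_{29/21} m = O(\log m)$ updates overall.

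Combining the pieces gives a total of $\tilde{O}(k) + \tilde{O}(k/\eps) + O(\log m)\cdot\tilde{O}(k/\eps) = \tilde{O}(k/\eps)$ bits, since the $O(\log m)$ factor is absorbed into the $\tilde{O}$ notation. The only delicate point is the growth argument for bounding the number of updates of $I$; the remaining accounting is routine once the per-update and per-restart costs are each pinned at $\tilde{O}(k/\eps)$.
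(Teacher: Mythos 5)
Your proposal is correct, and its overall structure --- splitting the cost into continuously-running subroutines versus per-update charges, pinning each restart of \CEN$(-I,\eps/4)$ and each shipment of the $k$ local \CM\ sketches and counters at $\tilde{O}(k/\eps)$ bits, and then multiplying by the number of updates of $I$ --- is exactly the paper's accounting. The one place where you genuinely diverge is the key sub-claim that $I$ is updated only $O(\log m)$ times. The paper argues this by exhibiting what it calls the worst case: two elements $x$ and $y$ that alternately accumulate a $c_H = 0.58$ fraction of the stream, and computing that each alternation multiplies the relevant count by $(\frac{c_H}{1-c_H})^2 > 1.3$; it does not formally justify that this adversarial pattern dominates all others. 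Your argument instead proves the growth directly for an arbitrary stream: at the moment $I$ is set to $a$ the future candidate $b$ has $m_b < 0.42\,M_a$, while at the moment of the switch it needs $m_b \ge 0.58\,M_b$, and since at most $M_b - M_a$ items arrive in between this forces $M_b \ge \frac{29}{21}M_a$. This is a cleaner and strictly more rigorous route to the same $O(\log m)$ bound, at the cost of relying on the (reasonable) reading of Algorithm~\ref{algo:trackprob} that an ``update'' is only triggered when the newly identified frequent element differs from the current $I$ --- which you correctly note is forced anyway, since no two elements can simultaneously have empirical probability above $0.58$. The remaining bookkeeping matches the paper's item-by-item cost list, so the two proofs yield the same final bound of $O(\frac{k}{\eps}\log\frac{1}{\delta}\log^2 m) = \tilde{O}(\frac{k}{\eps})$.
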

\begin{proof}
We show that $z$ will be updated by at most $O(\log m)$ times. Suppose at some time step $m_0$ items have been processed, and $z = a$ is the frequent element. By definition, the frequency of $a$ must satisfy $m_a > 0.58m_0$. We continue to process the incoming items, and when $z$ is updated by another element at the moment the $m_1$-th item being processed, we must have $m_a < 0.42m_1$. We thus have $\frac{m_1}{m_0} \geq \frac{0.58}{0.42} = 1.38 > 1$, which means that every time $z$ gets updated, the total number of items has increased by a factor of at least $1.38$ since the last update of $z$. Thus the number of updates of $z$ is bounded by $O(\log m)$.
\smallskip



We list the communication costs of all subroutines in \EstP.
\begin{enumerate}
\item[(1)] $\text{\CA}(0.01)$ costs $O(k\log^2 m)$ bits;
\item[(2)] $\text{\CEN}([n], \eps/4)$ costs $O(\frac{k}{\eps}\log m)$ bits;
\item[(3)]  $\text{\CEN}(-z, \eps/4)$ costs $O(\frac{k}{\eps}\log m)$ bits; 
\item[(4)] sending $k$ sketches of \CM$(\eps/4, \delta)$  to the coordinator costs $O(\frac{k}{\eps}\log\frac{1}{\delta}\log m)$ bits;
\item[(5)] sending $k$ local counters to the coordinator costs $O(k\log m)$ bits. 
\end{enumerate}
Among them, (3), (4), (5) need to be counted by $O(\log m)$ times, and thus
the total communication cost is bounded by $O(\frac{k}{\eps}\log\frac{1}{\delta}\log^3 m)$.
\end{proof}

Combining Lemma \ref{lem:CE}, Lemma \ref{lem:entropy-correctness}, Lemma \ref{lem:sample2} and Lemma \ref{lem:proTrack}, we now prove Theorem \ref{thm:ent-inf-cost},

\begin{proof}
We have already showed the correctness and the communication cost.
  The only things left are the space and processing time per item. The processing time and space usage are dominated by those used to track $(S_0, \hat{R}_0, S_1, \hat{R}_1)$'s. So the bounds given in Lemma \ref{lem:entropy-1} also hold.

\end{proof}

\subsection{Sliding Windows}
\label{sec:entropy-sliding}

In Section \ref{sec:ams-sliding} we have extended our general AMS sampling algorithm to the sequence-based sliding window case. We can apply that scheme directly to the Shannon entropy. However, the communication cost is high when the Shannon entropy of the stream is small. On the other hand, it is unclear if we can extend the technique of removing the frequent element to the sliding window case: it seems hard to maintain $(S_0^w, R_0^w)\sAMS A^w$ and $(S_1^w, R_1^w)\sAMS A^w\backslash S_0^w$ simultaneously in the sliding window using $\text{poly}(k, 1/\eps, \log w)$ communication,  $\text{poly}(1/\eps, \log w)$ space per site and $\text{poly}(1/\eps, \log w)$ processing time per item.

By slightly adapting the idea in Section \ref{sec:ams-sliding}, we have the following result that may be good enough for most practical applications.
\begin{theorem}
\label{thm:entropy-sliding}
 There is an algorithm that maintains $\hat{H}(A^w)$ at the coordinator as an approximation to the Shannon entropy $H(A^w)$ in the sequence-based sliding window case such that $\hat{H}(A^w)$ is a $(1+\eps, \delta)$-approximation to $H(A^w)$  when $H(A^w) > 1$ and an $(\eps, \delta)$-approximation when $H(A^w) \leq 1$. The algorithm uses    $O \left ( (k/\eps^2 + \sqrt{k}/\eps^3) \cdot \frac{1}{\delta}\log^4 m \right )$
bits of communication, $O(\eps^{-2}\cdot\log \frac{1}{\delta}\log^3 m)$ bits space per site and amortized $O(\eps^{-2}\cdot\log\frac{1}{\delta}\log^2 m)$ time per item.
\end{theorem}

\begin{proof}
Instead of setting $\kappa$ (the number of sample copies we run in parallel) as Equation (\ref{eq:kappa}), we simply set $\kappa = \Theta(\eps^{-2}\log w\log \delta^{-1})$, thus the space and time usage for each site.
The correctness is easy to see: since we are allowed to have an additive approximation error $\eps$ (rather than $\eps \E[X]$) when $\E[X] \leq 1$, we can replace $\eps$ by $\frac{\eps}{\E[X]}$ in Inequality (\ref{eq:inq:c}) to cancel $\E[X]$. For the communication cost, we just replace the value of $\kappa$ in Section \ref{sec:ams-sliding} (defined by Equation (\ref{eq:kappa})) with $\Theta(\eps^{-2}\log w\log \delta^{-1})$.
\end{proof}

With the same argument we have a result for the time-based sliding window case where the window size is $t$. 

\begin{theorem}
\label{thm:entropy-sliding-time}
 There is an algorithm that maintains $\hat{H}(A^t)$ at the coordinator as an approximation to the Shannon entropy $H(A^t)$ in the time-based sliding window setting such that $\hat{H}(A^t)$ is a $(1+\eps, \delta)$-approximation to $H(A^t)$  when $H(A^t) > 1$ and an $(\eps, \delta)$-approximation when $H(A^t) \leq 1$. The algorithm uses \\ $O \left ( (k/\eps^2 + \sqrt{k}/\eps^3) \cdot \frac{1}{\delta}\log^4 m \right )$
bits of communication, $O(\eps^{-2}\cdot\log \frac{1}{\delta}\log^4 m)$ bits space per site and amortized $O(\eps^{-2}\cdot\log\frac{1}{\delta}\log^2 m)$ time per item.
\end{theorem}

\section{Tsallis Entropy}
\label{sec:Tsallis}
Recall that $\vec{p} = (p_1, p_2, \ldots, p_n) =  (\frac{m_1}{m}, \frac{m_2}{m}, \ldots, \frac{m_n}{m})$ is the vector of empirical probabilities. The $q$-th Tsallis entropy of a stream $A$ is defined as $$T_q(\vec{p}) = \frac{1-\sum_{i\in[n]}p_i^q}{q-1}.$$
It is well-known that when $q\rightarrow 1$, $T_q$ converges to  the Shannon entropy. In this section, we give an algorithm that continuously maintains a $(1+\eps, \delta)$-approximation to $T_q(\vec{p})$ for any constant $q > 1$.

Similar to the analysis for the Shannon entropy, we again assume that we can track the exact value of $m$ at the coordinator without counting its communication cost. To apply the general AMS sampling scheme, we use $g_m(x) = x - m (\frac{x}{m})^q$, hence 
\begin{eqnarray*}
\overline{g_m}(A) &=& \frac{1}{m}\sum_{i\in[n]}\left[m_i - m\left(\frac{m_i}{m}\right)^q\right] \\
& =& 1 - \sum_{i\in[n]}p_i^q = (q-1)T_q(\vec{p}).
\end{eqnarray*}
Let $Z$ consist of elements in the stream $A$ such that each $z \in Z$ has $m_z \ge 0.3m$. Thus $\abs{Z} \le 4$.
Consider the following two cases:
\begin{itemize}
\item $Z = \emptyset$. In this case  
$\overline{g_m}(A) \geq 1 - (\frac{1}{3})^{q-1}$.

\item $Z \neq \emptyset$. In this case 
$$\overline{g_m}(A) = \left(1- \sum_{z \in Z}p_z\right)\overline{g_m}(A \backslash Z) + \frac{1}{m} \sum_{z \in Z} g_m(m_z)$$ with $\overline{g_m}(A\backslash Z) \geq 1 - (\frac{1}{3})^{q-1}$.  
\end{itemize}
Thus we can use the same technique as that for the Shannon entropy. That is, we can track the frequency of each element $z \in Z$ separately (at most $4$ of them), and simultaneously remove all occurrences of elements in $Z$ from $A$ and apply the AMS sampling scheme to the substream $A\backslash Z$.

We only need to consider the input class $\mathcal{A}'=\{ A \in [n]^{m'}: 0 < m' \leq m, \forall i\in[n], m_i \leq 0.3 m\}$. The algorithm is essentially the same as the one for the Shannon entropy in Section~\ref{sec:Shannon}; we thus omit its description.

\begin{theorem}
\label{thm:tsallis}
There is an algorithm that maintains at the coordinator a $(1+\eps, \delta)$-approximation to $T_q(A)$ for any constant $q > 1$, using
$O \left( (k/\eps^2 + \sqrt{k}/\eps^3)\cdot \log \frac{1}{\delta}\log^4 m \right)$ bits of communication. Each site uses $O(\eps^{-2} \cdot \log\frac{1}{\delta} \log^2 m)$ and amortized $O(\eps^{-2}\cdot \log \frac{1}{\delta} \log m)$ time per item.
\end{theorem}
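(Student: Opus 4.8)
The plan is to mirror the Shannon-entropy development of Section~\ref{sec:Shannon} almost verbatim, replacing $f_m(x) = x\log(m/x)$ by $g_m(x) = x - m(x/m)^q$ and the threshold $0.7m$ by $0.3m$. The decomposition $\overline{g_m}(A) = (1 - \sum_{z\in Z}p_z)\,\overline{g_m}(A\backslash Z) + \frac{1}{m}\sum_{z\in Z}g_m(m_z)$ already isolates the heavy elements $Z$ (those with $m_z \ge 0.3m$, of which there are at most $4$), so the remaining substream $A\backslash Z$ lies in the class $\mathcal{A}'$ on which the generic AMS-sampling framework applies.

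First I would establish the analog of Lemma~\ref{lem:trackable}: that $\lambda_{g_m,\mathcal{A}'} = O(1)$ and $\inf_{A\in\mathcal{A}'}\overline{g_m}(A) = \Omega(1)$, with the hidden constants depending on the fixed $q$. Writing $X(r) = g_m(r) - g_m(r-1) = 1 - \frac{r^q - (r-1)^q}{m^{q-1}}$, I would check that $X$ is decreasing in $r$ and that on $1 \le r \le 0.3m$ it satisfies $1 - q\cdot 0.3^{q-1} \lesssim X \le 1$; since $q\cdot 0.3^{q-1} < 1$ for every $q>1$, both $\inf X$ and $\E[X] = \overline{g_m}(A) \ge 1 - (1/3)^{q-1}$ are positive constants. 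For the $(1+\eps)$-stability I would bound $|X(r) - X(\hat r)|$ by controlling $\phi(r) = r^q - (r-1)^q$ through its derivative $\phi'(\xi) = q(\xi^{q-1} - (\xi-1)^{q-1})$; the key cancellation is that the resulting bound is $O(\eps)\cdot (r/m)^{q-1} \le O(\eps)\cdot 0.3^{q-1} = O(\eps)$ (the small-$r$ edge cases force $\hat r = r$, exactly as in Lemma~\ref{lem:trackable}). Dividing by $\inf X = \Omega(1)$ yields $\lambda_{g_m,\mathcal{A}'} = O(1)$.

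Then I would carry out the analog of Lemma~\ref{lem:entropy-1}: since $0 < X \le 1$ on $\mathcal{A}'$, taking $a = 0$, $b = 1$, $E = \inf\overline{g_m} = \Omega(1)$ in (\ref{eq:pi}) gives $\sup_{A\in\mathcal{A}'}\pi(A) = O(1)$ (note there is no $\log m$ factor here, in contrast to Shannon), so $\kappa = O(\eps^{-2}\log\delta^{-1})$ and Corollary~\ref{cor:ams-inf-2} tracks $\overline{g_m}(A\backslash Z)$ within $(1+\eps,\delta)$ using $\tilde O(k/\eps^2 + \sqrt k/\eps^3)$ bits. In parallel I would run the heavy-element machinery: use \CA\ to detect the (at most $4$) elements of $Z$, maintain $O(1)$ nested samples $(S_0,R_0)\sAMS A,\ (S_1,R_1)\sAMS A\backslash S_0,\ldots$ so that the first sample whose element avoids $Z$ is distributed as $A\backslash Z$ (these elements are distinct by construction, so a constant number of levels suffices, and the restart count is polylogarithmic by iterating Lemma~\ref{lem:sample2} and Lemma~\ref{lem:bd}), and track the heavy probabilities together with the remainder mass $1 - \sum_{z\in Z}p_z$ to relative error via the \EstP-style combination of \CA, \CEN\ and \CM\ from Lemma~\ref{lem:proTrack}, at cost $\tilde O(k/\eps)$.

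Finally I would assemble $\widehat{\overline{g_m}}(A) = (1 - \sum_z \hat p_z)\,\widehat{\overline{g_m}}(A\backslash Z) + \frac{1}{m}\sum_z g_m(\hat m_z)$ and propagate errors exactly as in Lemma~\ref{lem:entropy-correctness}, splitting into $Z = \emptyset$ (direct AMS on $A\in\mathcal{A}'$) and $Z\neq\emptyset$ (the decomposition). The main obstacle I anticipate is the same delicacy that makes \EstP\ necessary for Shannon: when a single element dominates, $\overline{g_m}(A) = 1 - \sum_i p_i^q$ tends to $0$, so the additive errors on $\sum_z p_z$ and on $\frac{1}{m}g_m(m_z) = p_z - p_z^q$ must be shown to be $O(\eps)\cdot\overline{g_m}(A)$ rather than merely $O(\eps)$; this is what forces relative-error tracking of the remainder mass and a careful expansion of $p_z - p_z^q = p_z(1 - p_z^{q-1})$ near $p_z = 1$, where $1 - p_z^{q-1} \approx (q-1)(1-p_z)$. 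For constant $q$ this reduces to the bounded-derivative argument used for $p_z\log(1/p_z)$ in Lemma~\ref{lem:entropy-correctness}, and the communication bound $\tilde O(k/\eps^2 + \sqrt k/\eps^3)$ then follows by summing the AMS cost with the $\tilde O(k/\eps)$ heavy-element cost, with $\tilde O(\eps^{-2})$ space and time per item as before.
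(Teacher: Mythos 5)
Your proposal is correct and follows essentially the same route as the paper: reduce to the input class $\mathcal{A}'$ with threshold $0.3m$ via the heavy-element decomposition, bound $E=\inf\overline{g_m}=1-(1/3)^{q-1}=\Theta(1)$, $\sup\pi=O(1)$ and $\lambda_{g_m,\mathcal{A}'}=O(1)$, and plug into Corollary~\ref{cor:ams-inf-2}, reusing the Shannon-entropy machinery for the heavy part. The only (immaterial) deviations are your choice $a=0$ instead of the paper's $a=q$ in bounding $\pi$, and that you spell out the nested-sampling and error-propagation details for $|Z|\le 4$ that the paper leaves implicit by reference to Section~\ref{sec:Shannon}.
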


\begin{proof}
The algorithm and correctness proof are essentially the same as that for the Shannon entropy in Section~\ref{sec:Shannon}. We thus only analyze the costs.

Let us first bound the corresponding $\lambda_{g_m, \mathcal{A'}}, \sup_{A\in\mathcal{A'}}\pi(A)$ for $g_m$ under the input class $\mathcal{A}'$: $$E = \inf_{A'\in\mathcal{A}'} \overline{g_m}(A') = 1 - \left(\frac{1}{3}\right)^{q-1} = \Theta(1).$$ Let $h(x) = g_m(x) - g_m(x-1)$, $h'(x) < 0$. As $q > 1$,
\begin{eqnarray*}
\abs{h(m)} &=& g_m(m-1) \\
&=& m-1 - m\left(1-\frac{1}{m}\right)^q \\ 
&=& q - 1 + O(1/m).
\end{eqnarray*}
We thus set $a = q$.
On the other hand, $h(1) \leq 1$, we thus set $b = 1$. Now 
$\sup_{A\in\mathcal{A'}}\pi(A) = O\left(\frac{(1 + a/E)^2(a + b)}{a + E}\right) = O(1)$ (recall that $q$ is constant). 

Next, note that 
\begin{equation*}
\label{eq:z-1}
h(x) = -m^{1-q}( x^q - (x-1)^q) + 1 \approx 1 - q \left(\frac{x}{m}\right)^{q-1}, \text{ and}
\end{equation*} 
\begin{eqnarray*}
\label{eq:z-2}
h(x) - h((1+\eps)x) &\approx& -m^{1-q}(qx^{q-1} - q(1+\eps)^{q-1} x^{q-1}) \\
&\approx& \eps (q-1) \left(\frac{x}{m}\right)^{q-1}.
\end{eqnarray*}
Also note if $x < 0.3 m$, then $\left(\frac{x}{m}\right)^{q-1} \le \left(\frac{1}{3}\right)^{q-1}$ and $h(x) > h(0.3m) = \Omega(1)$. Therefore for $q > 1$ we can find a large enough constant $\lambda$ to make Equation~(\ref{eq:def:trackable}) hold.

For the communication cost, simply plugging $\lambda_{g_m, \mathcal{A}'} = O(1)$ and
$\sup_{A\in\mathcal{A'}}\pi(A) = O(1)$ to Equation~(\ref{eq:inf-cc}) yields our statement. Note that we have $\kappa = \Theta(\eps^{-2}\log \delta^{-1})$, hence imply the space usage and the processing time per item (using Corollary \ref{cor:ams-inf-2}).
\end{proof}

We omit the discussion on sliding windows since it is essentially the same as that in the Shannon entropy. The results are presented in Table \ref{tab:results}.







\section{Concluding Remarks}
\label{sec:conclude}

In this paper we have given improved algorithms for tracking the Shannon entropy function and the Tsallis entropy function in the distributed monitoring model.  A couple of problems remain open.  First, we do not know if our upper bound is tight.  In \cite{WZ12} a lower bound of $\Omega(k/\eps^2)$ is given for the case when we have item deletions.  It is not clear if the same lower bound will hold for the insertion-only case. Second, in the sliding window case, can we keep the approximation error to be multiplicative even when the entropy is small, or do strong lower bounds exist? The third, probably most interesting, question is that whether we can apply the AMS sampling framework to track other functions with improved performance in the distributed monitoring model?  Candidate functions include Renyi entropy, $f$-divergence, mutual information, etc.  Finally,  it would be interesting to implement and test the proposed algorithms on real-world datasets, and compare them with related work competitors.

\bibliographystyle{abbrv}
\bibliography{paper,rest,monitor,mypub}


\appendix

\section{Proof of Lemma \ref{lem:bd}}
\label{app:lem:bd}
\begin{proof}
We can assume that $U_1,\ldots,U_m$ are distinct, since the event that $U_i = U_j$ for some $i \neq j$ has measure $0$.
Let $T_i = \min \{U_1,\ldots U_m \}$. Let $\text{OD}_i$ denote the order of $U_1,\ldots,U_i$. Let $\Sigma_i=\{ \text{Permutation of}~ [i]\}$. It is clear that for any $\sigma \in \Sigma_i$, $\Pr[\text{OD}_i = \sigma\ |\  T_i > t] = \Pr[\text{OD}_i = \sigma] = \frac{1}{i!}$.
Since the order of $U_1,\ldots,U_i$ does not depend on the minimal value in that sequence, we have
\begin{align*}
        \Pr[\text{OD}_i = \sigma,  T_i > t] 
       &=  \Pr[\text{OD}_i = \sigma\ |\  T_i > t] \cdot \Pr[T_i > t] \\
        &=  \Pr[\text{OD}_i = \sigma] \cdot \Pr[T_i > t].
\end{align*}
Therefore, the events $\{\text{OD}_i=\sigma\}$ and $\{ T_i > t\}$ are independent.

For any given $\sigma \in \Sigma_{i-1}$ and $z \in \{0,1\}$ :
\begin{eqnarray}
    &&  \Pr[J_i = z\ |\ \text{OD}_{i-1}=\sigma] \nonumber \\
      &=& \lim_{t \rightarrow 0} \Pr[J_i = z, T_{i-1} > t\ |\ \text{OD}_{i-1}=\sigma] \nonumber\\
                                &=& \lim_{t \rightarrow 0}\frac{\Pr[J_i = z\ |\ T_{i-1} > t, \text{OD}_{i-1}=\sigma] }{\Pr[\text{OD}_{i-1}=\sigma]/ \Pr[ T_{i-1} > t, \text{OD}_{i-1}=\sigma]}\label{eq:a-1}\\   
                                &=& \lim_{t \rightarrow 0}\frac{\Pr[J_i = z\ |\ T_{i-1} > t] \cdot \Pr[T_{i-1} > t] }{\Pr[\text{OD}_{i-1}=\sigma] / \Pr[\text{OD}_{i-1}=\sigma]} \label{eq:a-2}\\    
                                &=& \lim_{t \rightarrow 0} \Pr[J_i = z, T_{i-1} > t]\nonumber\\
                                &=& \Pr[J_i = z],   \nonumber
\end{eqnarray}
where (\ref{eq:a-1}) to (\ref{eq:a-2}) holds because the events $\{ J_i = z\}$ and $\{ \text{OD}_{i-1}=\sigma\}$ are conditionally independent given $\{ T_{i-1} > t\}$, and the events $\{\text{OD}_i=\sigma\}$ and $\{ T_i > t\}$ are  independent.

Therefore, $J_i$ and $\text{OD}_{i-1}$ are independent. Consequently, $J_i$ is independent of $J_1,\ldots, J_{i-1}$, since the latter sequence is fully determined by $\text{OD}_{i-1}$.

\begin{eqnarray*}
      \Pr[J_i = 1] &=& \Pr[U_i < \min \{U_1, \ldots, U_{i-1} \}] \\
                 &=& \int_0^1(1-x)^{i-1}dx 
                 = \frac{1}{i}.
\end{eqnarray*}
    Thus $\E[J_i] = \Pr[J_i = 1] = \frac{1}{i}$. By the linearity of expectation, $\E[J] = \sum_{i \in [m]} \frac{1}{i} \approx \log m$.

    Since $J_1, \ldots, J_m$  are independent, $\Pr[J > 2 \log m] < m^{-1/3}$ follows from a Chernoff Bound.
\end{proof}

\section{The Assumption of Tracking $m$ Exactly}
\label{app:why:m}
We explain here why it suffices to assume that $m$ can be maintained at the coordinator {\em exactly without any cost}.  First, note that we can always use \CEN\ to maintain a $(1+\eps^2)$-approximation of $m$ using $O(\frac{k}{\eps^2}\log m)$ bits of communication, which will be dominated by the cost of other parts of the algorithm for tracking the Shannon entropy. 
Second, the additional error introduced for the Shannon entropy by the $\eps^2 m$ additive error of $m$ is negligible: let $g_x(m) = f_m(x) - f_m(x-1) = x\log\frac{m}{x} - (x-1)\log\frac{m}{x-1}$, and recall (in the proof of Lemma \ref{lem:trackable}) that $X > 0.5$ under any $A\in\mathcal{A'}$. It is easy to verify that
$$
	\left|g_x((1\pm \eps^2)m) - g_x(m)\right| = O(\eps^2)  \leq O(\eps^2) X,
$$
which is negligible compared with $|X - \hat{X}| \leq O(\eps) X$ (the error introduced by using $\hat{R}$ to approximate $R$). Similar arguments also apply to $X'$, and to the analysis of the Tsallis Entropy.

\section{Pseudocode for \CEN}
\label{app:CEN}
Algorithm \ref{algo:ReceiveItemCEN}, \ref{algo:SampleUpdateCEN} describe how we can maintain a $(1 + \eps)$-approximation to the frequency of element $e$.
\medskip

\begin{algorithm}[H]
initialize $c \gets 1, ct \gets 0$\;
\ForEach{ $e$ received}{
  $ct \gets ct + 1$\;
  \If{$ct = 1$}{
    send a bit to the coordinator\;
  }
  \If {$ct > (1 + \eps) c$}{
        $c \gets ct$\;
        send a bit to the coordinator\;
    }
}
\caption{ Receive an item $e$ at a site}
\label{algo:ReceiveItemCEN}
\end{algorithm}

\begin{algorithm}[H]
initialize $ct_i \gets 0$ for all $i \in [k]$\;
initialize $c \gets 0$\;
\tcc*[h]{maintain $c$ as the count}

\While{True}{
  \If{received a bit from site $i$}{
    \If{$ct_i = 0$}{
      $c \gets c + 1$\;
      $ct_i \gets 1$\;
    } \Else{
      $c \gets c + \eps\cdot ct_i$\;
      $ct_i \gets (1 + \eps) ct_i$\; 
    }
  }  
}
\caption{{\bf \CEN}$(e, \eps)$ maintains $c$ as the count at the coordinator}
\label{algo:SampleUpdateCEN}
\end{algorithm}

\end{document}